\documentclass[sigconf]{acmart}

\usepackage{enumitem}
\usepackage{array}
\usepackage{algorithm}
\usepackage[noend]{algpseudocode}
\usepackage{xcolor}
\usepackage{tikz}
\usepackage{balance}
\usetikzlibrary{arrows.meta}
\usepackage[nameinlink,noabbrev]{cleveref}

\makeatletter
\renewcommand{\theHALG@line}{\arabic{algorithm}.\arabic{ALG@line}}
\makeatother

\setcopyright{none}
\renewcommand\footnotetextcopyrightpermission[1]{}
\acmConference{}{}{}
\acmYear{}

\definecolor{ink}{HTML}{17324D}
\definecolor{navy}{HTML}{234A73}
\hypersetup{
    colorlinks=true,
    linkcolor=navy,
    citecolor=navy,
    urlcolor=navy
}
\setlist{nosep,leftmargin=1.5em}

\newtheorem{theorem}{Theorem}
\newtheorem{lemma}[theorem]{Lemma}

\newtheorem{claim}[theorem]{Claim}

\crefname{theorem}{theorem}{theorems}
\Crefname{theorem}{Theorem}{Theorems}
\crefname{proposition}{proposition}{propositions}
\Crefname{proposition}{Proposition}{Propositions}
\crefname{lemma}{lemma}{lemmas}
\Crefname{lemma}{Lemma}{Lemmas}
\crefname{corollary}{corollary}{corollaries}
\Crefname{corollary}{Corollary}{Corollaries}
\crefname{claim}{claim}{claims}
\Crefname{claim}{Claim}{Claims}

\title{Non-Clairvoyant Scheduling is Hard Even for Trees
}

\author{Kunal Agrawal}
\authornote{Supported in part by NSF grants 2216971, 2106699, and 2107280.}
\affiliation{\institution{Washington University in St. Louis}
  \city{St. Louis}
  \state{Missouri}
  \country{USA}
}
\email{kunal@wustl.edu}

\author{Owen Druzgal}
\affiliation{\institution{University of Michigan}
  \city{Ann Arbor}
  \state{Michigan}
  \country{USA}
}
\email{odruzgal@umich.edu}

\author{Milind Prabhu}
\authornote{Supported by NSF award CCF-2327011.}
\affiliation{\institution{University of Michigan}
  \city{Ann Arbor}
  \state{Michigan}
  \country{USA}
}
\email{milindpr@umich.edu}

\author{Jinhao Zhao}
\affiliation{\institution{Washington University in St. Louis}
  \city{St. Louis}
  \state{Missouri}
  \country{USA}
}
\email{jinhaoz@wustl.edu}
\newcommand{\OPT}{\textsc{Opt}}
\newcommand{\opt}{\operatorname{opt}}
\newcommand{\ALG}{\mathrm{ALG}}

\newcommand{\FIFO}{\mathrm{FIFO}}
\newcommand{\SFF}{\mathrm{SFF}}

\newcommand{\alg}{\mathrm{ALG}}

\begin{document}
\begin{abstract}
We study online scheduling of parallel jobs on $m$ identical processors.
Jobs arrive over time, and each job $J$ is modeled as a directed acyclic
graph (DAG) whose vertices are subjobs. Each subjob requires one unit of
processing on a single processor, and an edge $u \rightarrow v$ requires $u$ to
complete before $v$ can be processed. The goal is to design an online
algorithm that minimizes the maximum flow time, i.e., the maximum time
between the arrival and completion of any job.

We focus on non-clairvoyant online algorithms, which do not know a job's DAG
when it arrives. Instead, the DAG is revealed online: a subjob becomes known
only when it is ready, that is, after all its predecessors have completed.
First-In-First-Out (FIFO) is
a natural non-clairvoyant algorithm that prioritizes ready subjobs from
earlier-arriving jobs. Agrawal, Moseley, Newman, and Pruhs (SPAA 2024)~\cite{agrawal24} showed that FIFO
has competitive ratio $\Omega(\log m)$ even when every job is an out-tree.
They further showed that FIFO is $O(\log m)$-competitive in several
natural settings and asked whether this bound holds in general. They also
raised the question of whether a different non-clairvoyant algorithm
exists that is~$O(1)$\nobreakdash-competitive.

We answer both questions in the negative by proving a lower bound
of $\Omega(\min\{m,\opt\})$ for every non-clairvoyant online
algorithm.
Here, $\opt$ denotes the maximum flow time of an optimal offline schedule.
In particular, every non-clairvoyant algorithm has competitive ratio
$\Omega(m)$ on some instance with $\opt \ge m$.
The lower bound holds even when every job is an out-forest.

We complement these lower bounds with an asymptotically optimal
non-clairvoyant algorithm. A simple argument shows that FIFO is
$O(m)$-competitive. However, FIFO can have competitive ratio $\Omega(m)$
even when $\opt=O(1)$. To handle instances with small $\opt$, we design a
new Small-Frontier-First algorithm that is $O(\opt)$-competitive.
Combining the two algorithms yields an
$O(\min\{m,\opt\})$-competitive non-clairvoyant algorithm.

\end{abstract}

\maketitle
\pagestyle{plain}
\thispagestyle{plain}

\section{Introduction}

We consider the fundamental problem of scheduling parallel programs which arrive online on a parallel machine with $m$ identical processors.  Each job is modeled as a directed acyclic graph (DAG) where vertices represent tasks (sequential strands of instructions) and edges represent dependences between these tasks.  This is a natural model for a large class of parallel programs; for instance, programs that are constructed using fork-join parallelism (also called dynamic multithreading) which is supported by parallel programming languages such as Cilk dialects~\cite{frigo98}, OpenMP~\cite{openmp13}, and Intel's TBB~\cite{reinders10}.  In this setting, the structure of the DAG is often not known in advance and is revealed online as the job is processed.  In particular, a vertex is \emph{ready} to be executed when all its predecessors have been executed and the scheduler only finds out about the existence of a vertex when it is ready.  Therefore, the scheduler must be non-clairvoyant; it doesn't know anything about the jobs when they arrive; vertices are revealed when they become ready and the scheduler must proceed with this limited information.

We study the setting where these parallel jobs arrive online over time, the scheduler must operate non-clairvoyantly, and the goal is to minimize the \emph{maximum flow time} or \emph{maximum latency} --- that is, the maximum time between the arrival of a job and its completion.  Minimizing maximum flow time is a standard quality of service metric since it optimizes for the worst-case quality of service over all jobs.  The problem of minimizing maximum flow time has been studied for sequential jobs~\cite{ambuhl05} and parallel jobs in several different models of parallelism~\cite{agrawalLLM16spaa,pruhsRS10,moseleyZZ22}.

Intuitively, a scheduler for this metric should ``prioritize'' old jobs and this intuition turns out to be the correct one for most settings. A \emph{First-In-First-Out (FIFO) scheduler} operationalizes this intuition; it schedules all ready vertices (tasks) of the oldest job first and then the vertices of the second oldest job and so on until all $m$ processors are busy or it runs out of ready vertices. For sequential jobs (each job is a chain of vertices) on a single machine, FIFO is optimal~\cite{bender98}, while for sequential jobs on $m$ identical processors, FIFO is $(3-2/m)$-competitive~\cite{bender98,mastrolilli03}.  Similarly, for fully parallelizable jobs, it is also optimal on $m$ identical processors.

  The problem of minimizing maximum flow for parallel DAG jobs was first studied in~\cite{agrawalLLM16spaa}, where the authors show that FIFO is $O(1)$-competitive given $(1+\epsilon)$ times the speed of the optimal algorithm.  The question of how to optimize maximum flow on DAGs without speed augmentation was studied in~\cite{agrawal24} and the authors provide some partial answers.  First, in a somewhat counterintuitive result, they show that FIFO is $\Omega(\log m)$-competitive even on jobs that can be represented as out-trees, which form a simpler class than general DAGs.
  In the same paper, they provide an $O(1)$-competitive algorithm for out-trees if the entire structure of the tree is revealed to the scheduler when the job arrives.  They also provide a somewhat positive result for FIFO, namely, for a restricted setting of batched jobs, FIFO is $O(\log(m\opt))$-competitive, where $\opt$ is the maximum flow time provided by the optimal algorithm.
  However, the best achievable competitive ratio for non-clairvoyant scheduling of DAGs (or even trees) without additional speed remained completely open.

In this paper, we resolve this important open question, mostly in the negative direction. Our main result shows that no non-clairvoyant scheduler can achieve a competitive ratio asymptotically better than $\min\{m,\opt\}$, even for trees. The main theorem of the paper is the following:

\begin{theorem}\label{thm:non-clairvoyant-lb}
    Every deterministic online non-clairvoyant scheduler and every randomized online non-clairvoyant scheduler against an oblivious adversary have a competitive ratio of $\Omega(\min\{m, \opt\})$ even for trees.\footnote{To be more precise, the construction uses out-forests; but the main point is that each vertex has at most 1 parent.}
\end{theorem}

This result is quite surprising.  Maximum flow is generally an easier objective than other quality of service objectives such as average flow time and stretch.  For instance, for sequential jobs on a single machine, minimizing average flow time non-clairvoyantly (without knowing the processing times of each job when it arrives) has competitive ratio lower bounds of $\Omega(n^{1/3})$ for deterministic non-clairvoyant schedulers and $\Omega(\log n)$ for randomized schedulers against an oblivious adversary, where $n$ is the number of jobs~\cite{motwani94}.\footnote{SRPT is optimal if the processing times are known on arrival~\cite{schrage68}.}  Moreover, on $m\ge2$ identical machines, even when processing times are revealed on release, every deterministic or randomized online scheduler has competitive ratio lower bounds of $\Omega(\log(n/m))$ and $\Omega(\log P)$, where $P$ is the ratio of the largest to the smallest processing time~\cite{leonardiRaz07}.  Thus, without speed augmentation, no constant competitive ratio is possible for average flow time even for sequential jobs.  In contrast, FIFO is constant-competitive without speed for maximum flow~\cite{bender98,mastrolilli03}.  Therefore, \emph{a priori} one might imagine that some non-clairvoyant algorithm could work for maximum flow without speed, at least for jobs with a simple dependence structure like trees or forests.

In addition, in~\cite{agrawal24}, the authors show an $O(\log(m\opt))$ upper bound for FIFO for batched arrivals of jobs.  Say that $\opt$ is the maximum flow provided by the optimal scheduler.  If jobs only arrive at intervals of $\opt$---that is, sets of jobs arrive at time $0, \opt, 2\opt, ...$, with no arrivals in between---then FIFO provides a competitive ratio of $O(\log(m\opt))$ for general DAG jobs (not just trees).  We expected that this result could generalize to nonbatched arrivals since one can delay the arrivals of jobs to the next multiple of $\opt$ by increasing the maximum flow time of the optimal algorithm only by a factor of $2$.  The authors of the paper indicate that their analysis techniques do not generalize to the nonbatched setting and one would need additional insights to generalize the result.  Our lower bound shows that, in fact, these results cannot generalize.  It is counterintuitive, however, that batched arrivals give an exponential advantage to FIFO in terms of the achievable competitive ratio.

We also prove matching upper bounds.  We design a new non-clairvoyant scheduler that is a combination of FIFO (which by itself provides a competitive ratio of $O(m)$) and an algorithm called \emph{Small-Frontier-First} which tries to execute bottleneck vertices in DAGs in order to reveal parallelism quickly (and by itself achieves a competitive ratio of $O(\opt)$) and show the following result.  

\begin{theorem}
\label{thm:combined}
There is a deterministic non-clairvoyant algorithm for online DAG scheduling with
competitive ratio $O(\min\{m,\opt\})$. 
\end{theorem}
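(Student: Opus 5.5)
The plan is to prove the two ingredients separately and then combine them with a doubling-style switch.

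\textbf{Step 1: FIFO is $O(m)$-competitive.} Fix a time $t$ and a job $J$ that arrived at $r_J$ and is still incomplete at $t$. Every time step in $[r_J,t]$ falls into one of two classes.
\begin{itemize}
\item In a \emph{busy} step, all $m$ processors run vertices of jobs that arrived no later than $r_J$.
\item In a \emph{non-busy} step, every ready vertex of $J$ is being executed, so the remaining critical path of $J$ shrinks by one.
\end{itemize}
The span of $J$ is at most $\opt$, so there are at most $\opt$ non-busy steps. For the busy steps, take the latest time $t_0\le r_J$ at which FIFO was not busy with jobs of arrival time at most $r_J$. All jobs worked on in $[t_0,t]$ with arrival at most $r_J$ either arrived in $[t_0,r_J]$ or were already pending at $t_0$. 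The optimum must finish all work arriving in $[t_0,r_J]$ by $r_J+\opt$, so that work is at most $m(r_J-t_0+\opt)$. The backlog pending at $t_0$ is bounded by an induction on the flow times of earlier jobs, each of which had fewer than $m$ ready vertices at $t_0$ and span at most $\opt$, giving backlog $O(m\cdot\opt)$. Together these give total flow time $O(m\cdot\opt)$. The cleanest version is probably a direct work/span charging argument over the interval $[t_0,t]$, yielding $t-r_J\le O(m)\opt$.

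\textbf{Step 2: SFF is $O(\opt)$-competitive.} The algorithm gives priority, among pending jobs, to vertices of jobs whose current frontier (set of ready vertices) is small, say at most $\opt$-ish. The intuition is that a job with span at most $\opt$ and total work at most $m\opt$ either exposes a lot of parallelism or is bottlenecked on few vertices. The key claim I would aim for is that each job is complete within $O(\opt^2)$ steps of its arrival. The argument is a potential/charging argument: in any step where $J$'s frontier is not fully processed, either $J$'s frontier is small and all processors are busy with small-frontier vertices, or its frontier is large. Total work arriving in any window of length $L$ is at most $m(L+\opt)$, which limits how many fully busy steps can occur. This step is the main obstacle. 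Defining the priority rule so that large-frontier jobs still progress, and so that stalls are charged to work of a window of length $O(\opt)$, is delicate. As a first attempt I would use an ordering that is FIFO within frontier-size classes and charge each stall of $J$ to one unit of span of some job alive in $[r_J-O(\opt),t]$.

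\textbf{Step 3: combine.} Since $\opt$ is unknown, I would use a guess-and-double scheme. Run SFF while the observed maximum flow time stays below $c\cdot g^2$ for the current guess $g$ of $\opt$, and run FIFO once $g\ge m$. If a job's flow time exceeds $c\,g^2$, Step 2 certifies that $\opt>g$, so $g$ is doubled. Once $g\ge m$ we switch permanently to FIFO, whose bound from Step 1 needs no knowledge of $\opt$. Because the guess grows geometrically and each phase inherits at most the backlog of $O(\min\{m,g\}\,g)$ flow time from the previous phase, the total flow is $O(\min\{m,\opt\})\cdot\opt$. The robustness of both analyses to an initial backlog of jobs (arrival windows shifted by $O(\text{current bound})$) needs to be checked, and is the secondary technical point.
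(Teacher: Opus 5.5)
\textbf{Overall.} Your architecture matches the paper's: FIFO for $O(m)$, an algorithm with maximum flow $O(\opt^2)$ when $\opt$ is known, and a doubling guess that switches permanently to FIFO once the guess reaches $m$. The proposal does not prove the theorem, however, because Step~2 is missing, and Step~3 lacks the mechanism that makes doubling valid.

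\textbf{Step 1.} This is essentially the paper's full/partial-step argument, with one slip. The backlog at $t_0$ is $O(m^2\opt)$ (fewer than $m$ alive jobs, each with at most $m\opt$ work), not $O(m\opt)$. Dividing it by $m$ is what gives $O(m\opt)$ busy steps. A backlog of $O(m\opt)$ would make FIFO $O(1)$-competitive, contradicting Theorem~\ref{thm:fifo-constant-opt}.

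\textbf{Step 2 is the genuine gap.} It is the real content of the theorem, and you neither define the $O(\opt)$-competitive algorithm nor prove its bound. The sketched rule also does not work as stated. Charging each stall of $J$ to a unit of span of some \emph{job} alive in a window cannot give an $m$-independent bound: a window of length $O(\opt)$ can contain on the order of $m\opt$ jobs, so the span you charge against grows with $m$. The paper's missing ingredients are:
\begin{enumerate}
\item Merge all jobs released in each window $[2j\opt,2(j+1)\opt)$ into one batch of depth at most $\opt$ and work at most $3m\opt$. Then only $T/(2\opt)+1$ batches become eligible in any interval of length $T$.
\item Call a batch small if it has at most $m$ ready subjobs; the right threshold is $m$, not of order $\opt$. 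Work in rounds that process the \emph{entire} ready set $P$ of all small batches over $\lceil |P|/m\rceil$ steps, filling spare capacity in the last step from the oldest large batch. Then every round in which a batch is small cuts its depth, so it is small in at most $\opt$ rounds and receives at most $m\opt$ work while small. Also, every round containing a large batch is fully busy. That property fails if a ``large'' job may have fewer than $m$ ready subjobs.
\item Only the oldest large batch is processed, so at most one batch turns from large to small per boundary. By induction this bounds the number of small batches and the round length by $4\opt$, and hence the backlog at boundaries by $16m\opt^2$. Charging the steps spent idle or on younger batches then gives flow at most $44\opt^2$.
\end{enumerate}

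\textbf{Step 3 needs a concrete mechanism, not just a check.} After a failed phase with guess $g$, the inherited jobs are already old. If a fresh SFF treats them as present at the phase start, the phase instance need not have optimum at most $\opt$: its optimum is at least the inherited work divided by $m$, which nothing bounds by $2g$. A later failure therefore no longer certifies $\opt>2g$.

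The paper fixes this as follows. It presents every unfinished or future job (inherited ones as residual DAGs) at artificial release $\hat r_J=r_J+cB^2$, and declares failure when some unfinished job reaches age $2cB^2$. The previous phase failed as soon as some job reached age $2c(B/2)^2$, so all artificial releases lie after the new phase's start. Hence the optimal schedule shifted by $cB^2$ is feasible for the phase instance, whose optimum is then at most $\opt$. Every job then finishes with flow at most $2cB^2$, or at most $cB^2+cm\opt$ in the final FIFO phase, where $B\le 2\min\{m,\opt\}$.
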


FIFO alone does not attain the preceding guarantee. Indeed, although FIFO is
$O(m)$-competitive, it is not $O(\opt)$-competitive. In particular, there
are instances with $\opt=O(1)$ on which FIFO has competitive ratio
$\Omega(m)$.

\begin{theorem}
\label{thm:fifo-small-opt-lb}
    FIFO provides a competitive ratio of $\Omega(m)$ even if $\opt$ (the maximum flow time provided by the optimal scheduler) is a constant.
\end{theorem}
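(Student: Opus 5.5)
The plan is to build an explicit family of out-tree instances in which the total work arriving per time step is exactly $m$. We will show that a schedule achieving constant maximum flow exists, but FIFO is forced to leave processors idle repeatedly. Because the work rate exactly saturates the machine, every idle processor-step FIFO incurs is permanent: the backlog $B(t)$ (unfinished work released before $t$) can never shrink below what idle time has created. Once $B(t)\ge cm^2$, some job must wait $\Omega(m)$ steps, giving flow $\Omega(m)$. The lower bound only needs FIFO to lose a constant amount of capacity in each of $\Omega(m)$ consecutive constant-length phases.

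\textbf{Gadget.} Each job $J_t$, released at time $t$, is a "comb" in which the structure is revealed only gradually. It has a spine $s_1\to s_2\to\dots\to s_k$ with $k$ a constant (say $k=2$ or $3$), and each spine vertex $s_i$ has about $m/k$ leaf children. Jobs are released one per time step.

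\textbf{OPT is constant.} Step $t$ of the optimal schedule runs:
\begin{itemize}
\item spine vertex $s_i$ of job $J_{t-i+1}$, for each $i$;
\item the leaves of $s_{i}$ of job $J_{t-i}$.
\end{itemize}
This pipeline is exactly $m$ wide if the leaf counts are chosen so that the spine vertices plus leaves sum to $m$. Every job then completes within $k+1$ steps, so $\opt=O(1)$. Checking this is a routine counting argument.

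\textbf{FIFO idles.} FIFO greedily fills processors with the ready leaves of the oldest job before touching the spine of the newest job. Once FIFO is one step behind, the leaves of old jobs saturate all $m$ processors in some step, so the newest job's spine vertex is deferred. That spine vertex has few ready siblings. So in a later step the only ready vertices are a spine vertex plus fewer than $m$ leaves, and processors go idle.

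The key step is an invariant: if at the start of phase $p$ FIFO lags OPT by $\ell_p$ jobs, then during phase $p$ FIFO idles at least one processor-step. Hence $\ell_{p+1}\ge \ell_p+\Omega(1/m)$, or better, in work units the backlog grows by $\Omega(1)$ per phase.

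To get $\Omega(m)$ flow we need more. I would try making the leaf count of each spine vertex close to $m$ and releasing a job every $k$ steps. Then a one-step delay of the spine costs nearly $m$ processor-steps of idling, and the lag in time increases by $\Omega(1)$ per phase. We run $\Theta(m)$ phases until the lag is $\Theta(m)$. After that, the adversary stops releasing jobs, or lets the lag saturate; FIFO's $O(m)$ upper bound shows the lag cannot exceed that anyway.

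\textbf{Main obstacle.} The hardest part is the invariant: a large backlog tends to supply many ready leaves from many old jobs, which would let FIFO keep all processors busy and stop the idling. To defeat this, I would make each job's leaves hang under the \emph{last} spine vertex only, so that old jobs' ready work is released in a burst and then exhausted. I would then verify, by tracking FIFO's exact state as a periodic pattern, that in every phase there is a step where only the sequential spine vertices of the lagging jobs are ready. If one arrival pattern fails, the fallback is to interleave two job types: wide stars that soak up processors under FIFO's priority, and thin chains whose delay causes the idle steps. Their ratio would be tuned so the optimal schedule still pipelines them with constant flow.
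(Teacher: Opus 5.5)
Your accounting framework is sound and is the one the paper uses. Work arrives at rate exactly $m$, so idle slots are never recovered, and a backlog of order $m^2$ forces some job to have flow $\Omega(m)$. The gap is the step you flag as the main obstacle: you never exhibit an instance on which FIFO provably idles in every phase, and your concrete candidates do not.

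For your first comb (one job per step), the pipelined state has exactly $m$ ready subjobs. FIFO therefore processes all of them and reproduces your optimal schedule, so it never falls behind.

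Hanging all leaves under the last spine vertex also fails, because FIFO stabilizes. Take $k=2$: each job is $s_1\to s_2$ with $2m-2$ leaves under $s_2$, and one job is released every $2$ steps. FIFO has idle processors only in timesteps before $5$. Afterwards it alternates between two kinds of step. One runs $m$ leaves of $J_j$. The other runs the last $m-2$ leaves of $J_j$ together with $s_2$ of $J_{j+1}$ and $s_1$ of $J_{j+2}$, which is exactly full. Every job has flow at most $5$. An analogous computation for $k=3$ settles into a fully busy pattern with all flows at most $10$.

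The mechanism is that deferred spine vertices pile up only until they exactly fill the slack of the last leaf step; after that the lag stops growing. Nothing in your fallback of a tuned fixed interleaving of two job types rules out the same stabilization.

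The missing idea is an \emph{adaptive} adversary that exploits FIFO's non-clairvoyance to keep all alive jobs synchronized. The paper releases a job of $3m$ subjobs every $3$ steps, initially revealing only its root. If FIFO processes roots in a step that also contains a middle subjob, those jobs are declared short (root with $3m-1$ leaves). Otherwise they are declared long (root, one middle subjob, then $3m-2$ leaves). This choice maintains two invariants:
\begin{enumerate}
\item every job with a ready leaf is older than every job with an unprocessed root;
\item no middle subjob is ready while a leaf is ready.
\end{enumerate}
Consequently, roots that FIFO squeezes into the last step of a leaf run are declared long and cannot refill the leaf pool in the next step. When the leaves run out, each alive job has a single ready root or middle subjob. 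FIFO therefore idles at least once in every $3m-2$ consecutive steps, which is exactly what destroys the steady state above.

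The proof then concludes by contradiction. If every flow were below $T=3(m-1)$, fewer than $m$ jobs would be alive, so the backlog would stay at most $3m(m-1)$. Yet idle slots accumulate without bound while work still arrives at rate $m$, which is impossible.
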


Finally, we return to the batched-arrival setting discussed above, in which
jobs arrive only at times $0,\opt,2\opt,\ldots$, and determine its optimal
competitive ratio. Using essentially the argument of
Agrawal, Moseley, Newman, and Pruhs~\cite{agrawal24}, but with a more careful
analysis, we sharpen their $O(\log(m\opt))$ upper bound to $O(\log\opt)$ and
prove a matching lower bound.

\begin{theorem}
\label{thm:batched-optimal}
The optimal competitive ratio for non-clairvoyant online scheduling in the
batched-arrival setting is $\Theta(\min\{m,\log\opt\})$. FIFO achieves the
upper bound.
\end{theorem}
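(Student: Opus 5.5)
The statement has two halves, and I would prove them separately: an upper bound showing FIFO is $O(\min\{m,\log\opt\})$-competitive with batched arrivals, and a lower bound of $\Omega(\min\{m,\log\opt\})$ that holds for every non-clairvoyant algorithm.

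\textbf{Upper bound.} The $O(m)$ part is easy, because FIFO is $O(m)$-competitive in general (the simple argument mentioned before \Cref{thm:combined}). Batches arriving at times $0,\opt,2\opt,\ldots$ form a special case of general arrivals, so this bound applies directly.

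For the $O(\log\opt)$ part I would follow the argument of Agrawal, Moseley, Newman, and Pruhs. The key invariant is that FIFO has finished every batch released at time $k\opt$ by time $k\opt + O(\opt\log\opt)$. I would prove this by induction on batches, which reduces to one claim: if at the start of a phase the leftover work fits the bound, then within $O(\log\opt)$ windows of length $\opt$ the backlog is cleared.

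Inside a window, I would classify each FIFO step as \emph{busy} (all $m$ processors work) or \emph{non-busy}. Busy steps are charged against total work. Each batch has at most $m\opt$ work, since $\opt$ completes it within $\opt$ time on $m$ processors.

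On a non-busy step, every ready vertex of the oldest pending jobs runs. This shortens the remaining critical path of every such job by one. Each job has span at most $\opt$, so the oldest batch can see at most $\opt$ non-busy steps.

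The sharpening from $\log(m\opt)$ to $\log\opt$ has to come from a potential argument. Their $\log m$ factor arises from halving the number of processors left idle. Instead I would show that the remaining span (or remaining work of the oldest batch measured in units of $\opt$) halves every $O(\opt)$ steps. Since span is bounded by $\opt$ rather than by $m$, this gives $O(\log\opt)$ phases. I expect this refined halving argument to be the main obstacle. Concretely, the thing to prove is that within any window of length $c\cdot\opt$, either the oldest batch finishes or the residual "depth-weighted work" of older batches drops by a constant factor.

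\textbf{Lower bound.} I would adapt the out-forest construction behind \Cref{thm:non-clairvoyant-lb}, which I take to use recursive levels. I would restrict its arrivals to multiples of $\opt$ and use only $\Theta(\log\opt)$ levels, so that the adversary's advantage doubles each level rather than growing additively.

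Each batch has $m$ indistinguishable chains or trees. Exactly one hidden root, chosen by the adversary (uniformly at random, for the oblivious randomized case), expands into a large out-tree of depth about $\opt/2^i$. The online algorithm cannot tell which vertex is the bottleneck, so by a Yao-type averaging it wastes time.

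Each level then forces an additional $\Omega(\opt)$ delay, while the offline schedule, knowing the special vertex, still finishes each batch within $O(\opt)$. Capping the number of levels at $m$ gives the $\min\{m,\cdot\}$ form. Verifying that the offline schedule stays $O(\opt)$ across all levels is the delicate part of this construction.
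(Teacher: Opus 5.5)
\textbf{Upper bound.} Your $O(m)$ half is fine, but the $O(\log\opt)$ half has no working core. Every ingredient you list (busy versus non-busy steps, span at most $\opt$, $O(m\opt)$ work per window of length $\opt$, at most $\opt$ non-busy steps for the oldest batch) holds just as well for arbitrary arrivals. There, \Cref{thm:non-clairvoyant-lb} shows that even FIFO is $\Omega(\min\{m,\opt\})$-competitive. So the ``halving'' step you defer is exactly where batching must enter, and you give no mechanism for it. ``Clearing the backlog'' is also not the right target, since batches keep arriving.

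The missing idea is a coupling between idle steps and remaining work. Let $z_i(t)$ count the steps before $t$ in which FIFO processes fewer than $m$ subjobs of $\mathcal B_0,\ldots,\mathcal B_i$. In each such step all ready subjobs of $\mathcal B_i$ run, so every remaining subjob has at least $z_i(t)$ predecessors on some path. Because the whole batch shares one release time, $\OPT$ must fit the remaining work into the steps $i\opt+z_i(t),\ldots,(i+1)\opt-1$, giving $w_i(t)\le m(\opt-z_i(t))$.

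Now look at a strict record of the backlog $U_h$ just before a release. The $r$th oldest of the $q$ unfinished batches has age at least $(q-r+1)\opt$. When it arrived, it and the older batches held less than $U_h+m\opt$ work, and each non-prefix-idle step since then removed $m$ units. Rearranging gives $z_r>\sum_{\ell>r}(\opt-w_\ell/m)\ge\sum_{\ell>r}z_\ell$. Hence $z_r\ge 2^{q-r}$, and $z_1<\opt$ forces $q\le\lceil\log_2\opt\rceil$. The backlog is therefore always $O(m\opt\log\opt)$, and only then does your busy/non-busy count yield flow time at most $(\lceil\log_2\opt\rceil+2)\opt$.

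\textbf{Lower bound.} Your sketch fails as described, and it starts from a misreading. The construction behind \Cref{thm:non-clairvoyant-lb} is not recursive; it uses layered out-forest jobs whose wide layers hide the key among the last-processed subjobs. With only $m$ indistinguishable ready roots on $m$ processors, any algorithm runs all of them in one step and finds the bottleneck immediately. Hiding a key needs more than $m$ ready subjobs; the paper uses wide layers of $2m-1$.

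Moreover, your bound must defeat FIFO in particular. FIFO finishes any batch within $2\opt$ plus (backlog ahead of it)$/m$ steps, so single-batch ``levels'' cannot produce $\Omega(\opt\log\opt)$ delay. The delay must come from backlog accumulated over many epochs, with capacity wasted in each epoch while $\OPT$ wastes none. That requires $\OPT$ to finish every job within exactly the spacing $L$ while keeping all processors busy. ``Within $O(\opt)$'' does not suffice: an offline schedule needing more than the spacing puts the instance in the $g<\opt$ row of the table, not the batched setting. The paper achieves this with the prefix condition: every prefix of a job has at least as many wide as narrow layers.

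The logarithm then comes from a threshold rule. An active job receives wide layers until it has processed $bD$ wide subjobs with $D$ steps left in the epoch, and afterwards only narrow ones. If an epoch starting with at most $c$ alive jobs were fully busy, at least $m-c$ wide subjobs would run per step. This gives $(m-c)D_i\le b\bigl(c+\sum_{j>i}(D_j+2)\bigr)$, hence $D_i+2\le(4c+2)5^{p-i}$, contradicting $L=8c5^c$. One idle slot per epoch then accumulates over $cmL$ epochs. The $\min$ with $m$ comes from the requirement $c\le m/2$ in this step, not from capping a number of levels.
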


Together with previous work, this reveals a sharp threshold in the arrival
spacing. Let $g$ denote the spacing between consecutive batches.
{\centering
\small
\setlength{\tabcolsep}{4pt}
\begin{tabular}{@{}|
    >{\centering\arraybackslash}m{0.26\linewidth}|
    >{\centering\arraybackslash}m{0.35\linewidth}|
    >{\centering\arraybackslash}m{\dimexpr0.29\linewidth-4\arrayrulewidth\relax}|
    @{}}
\hline
\textbf{Arrival Spacing} & \textbf{Competitive Ratio} & \textbf{Source} \\
\hline
$g\le(1-\epsilon)\opt$ for any fixed $\epsilon>0$
& $\Theta(\min\{m,\opt\})$
& This paper \\
\hline
$g=\opt$
& $\Theta(\min\{m,\log\opt\})$
& This paper \\
\hline
$g\ge(1+\epsilon)\opt$ for any fixed $\epsilon>0$
& $\Theta(1)$
& FIFO with speed augmentation~\cite{agrawalLLM16spaa} \\
\hline
\end{tabular}
\par}

Thus, the table reveals a somewhat surprising phenomenon: as the arrival
spacing crosses $\opt$, the best achievable competitive ratio changes from
linear in $\opt$ below the threshold, to logarithmic at the threshold, and
to constant above it.

\subsection{Technical Overview}
\paragraph{General Lower Bound.}
The high-level idea is to construct an instance $I$ where the optimal
offline schedule keeps processors busy by overlapping work across jobs,
while a non-clairvoyant algorithm is repeatedly forced to leave processors
idle.

For an integer $L$, the adversary releases a job at each time $iL$. Each
job is a layered out-forest with $m\cdot L$ subjobs. Each layer is either
\emph{wide}, with $2m$ subjobs, or \emph{narrow}, with one subjob, except
possibly for a smaller final layer. Importantly, each layer has a \emph{key}
subjob that is the predecessor of every subjob in the next layer; all other
subjobs are leaves.
The instance is designed to have $\opt(I)=\Theta(L)$: a clairvoyant offline scheduler knows each key subjob in advance; therefore it can execute it early to reveal parallelism quickly and keep processors busy.

A non-clairvoyant algorithm faces two difficulties:
\begin{itemize}
    \item The algorithm cannot identify the key. The adversary chooses it
    among the last subjobs processed, so the algorithm finishes a layer
    before learning the next.
    \item The algorithm cannot see whether future layers are wide or narrow.
    The adversary chooses each layer's type when the layer is initially revealed.
\end{itemize}

The adversary can force all alive jobs to encounter narrow layers
simultaneously, causing an idle processor slot. These losses accumulate
into backlog until some job has large flow time, giving the
$\Omega(\min\{m,\opt\})$ lower bound.

\paragraph{Lower Bound for Batched Arrivals.}
In the batched setting, jobs are released at times
$0,\opt,2\opt,\ldots$. To construct a lower-bound instance with
$\opt(I)=L$, we release one job every $L$ timesteps and must give OPT a way
to finish each job before the next arrives. Since the total work of each job is $mL$, OPT must keep all processors busy at all times.  Therefore, while one processor advances
through the keys, the other $m-1$ processors must always have leaves to
process. This forces us to maintain a \emph{prefix property}: every prefix
of a job contains at least as many wide layers as narrow layers. The
adversary can still choose the layer order adaptively and force idle slots,
but it must respect this additional constraint. The resulting geometric
argument gives an $\Omega(\min\{m,\log\opt\})$ lower bound. A complementary
record-backlog argument shows that FIFO is $O(\log\opt)$-competitive: at a
record-high backlog, the number of unfinished batches is logarithmic because
their sequential progress grows geometrically with age. Together with FIFO's
$O(m)$ bound, this gives the optimal $\Theta(\min\{m,\log\opt\})$ ratio.

\paragraph{FIFO: A Simple $O(m)$ Upper Bound.}
FIFO's guarantee follows from a win-win between processing work and making
sequential progress. Whenever fewer than $m$ processors are busy, fewer than
$m$ jobs are alive. Each job contains at most $m\cdot\opt$ work, so the
backlog is bounded. During fully busy periods, FIFO processes work at least
as fast as the optimal schedule; together, these observations bound the
backlog by $O(m^2\cdot\opt)$. Now fix a job. Every timestep either uses all
processors on this job and jobs released no later, or processes every ready
subjob of this job, shortening its longest remaining path. The former steps
are bounded by the backlog divided by $m$, and the latter by $\opt$. Thus
FIFO has maximum flow time $O(m\cdot\opt)$ and competitive ratio $O(m)$.

\paragraph{Why FIFO Fails When OPT Is Small.}
A natural question is whether FIFO is constant-competitive when
$\opt=O(1)$. Surprisingly, it can still have competitive ratio $\Omega(m)$,
even when each job consists of just one or two sequential subjobs followed
by parallel leaves. FIFO prioritizes older leaves over younger sequential
work. The adversary arranges that, when the leaves are exhausted, the
remaining jobs' parallel work is not yet ready. With few jobs alive, this
forces idle processors. Repeating these losses produces large waiting
times, even though every job has constant depth. The optimal schedule avoids
these losses by processing younger jobs' sequential work alongside older
jobs' leaves, finishing every job in constant time.

\paragraph{Small-Frontier-First: Unlock Parallelism Early.}
FIFO's weakness suggests advancing sequential work before older parallel
work has run out. Our Small-Frontier-First algorithm groups arrivals from
$2\opt$-length windows into batches and works in rounds. In each round, it
processes the current ready
frontier of every \emph{small} batch, meaning one with at most $m$ ready
subjobs, and fills spare capacity with work from the oldest \emph{large}
batch. Each batch is small in at most $\opt$ rounds, since every such round
shortens its remaining path, and at most one large batch becomes small per
round. Hence round lengths and the number of small batches are $O(\opt)$.
A round with a large batch is full; otherwise only these small batches
remain, so the remaining work is $O(m\cdot\opt^2)$. Younger small batches
cause only $O(\opt^2)$ interruptions, giving an $O(\opt)$ ratio. Doubling
removes the assumption that $\opt$ is known and, together with FIFO, yields
the matching $O(\min\{m,\opt\})$ ratio.

\section{Preliminaries}
\label{sec:preliminaries}

We now formally define the main problem studied in this paper: online DAG scheduling on multiple processors with the objective of minimizing maximum flow time.

\paragraph{Jobs and Instances.}
In the DAG Scheduling problem there are $m$ identical unit-speed processors. There are discrete timesteps $0,1,2,\ldots$. An instance $I$ of DAG Scheduling  consists of jobs
$J_1,\ldots,J_n$ together with a release time 
$r_i\in\mathbb{Z}_{\geq 0}$ for each $J_i$. Each job $J_i$ is a directed acyclic graph whose vertices are unit-processing-time \emph{subjobs}. For subjobs $u,v \in J_i$, a directed edge $u\to v$
encodes the constraint that subjob $v$ can be processed only after $u$ is processed. 

\paragraph{Constraints on the Schedule.}
A valid schedule $\mathcal S$ of the instance $I$ assigns each subjob of $J_1, \ldots, J_n$ a timestep in $\mathbb Z_{\geq 0}$ subject to the following constraints:
\begin{enumerate}
    \item \textbf{(Release Time Constraints) }For every subjob $u\in J_i$  we have 
    $\mathcal{S}(u)\ge r_i$. 
    \item \textbf{(Precedence Constraints) } For every edge $u\to v$ in $J_i$, we have $\mathcal S(u)<\mathcal S(v)$.
    \item \textbf{(Processor Constraints) }For every $t\in \mathbb Z_{\geq 0}$, the number of subjobs $u$ (across various jobs) such that $\mathcal{S}(u) = t$ is at most $m$. 
\end{enumerate}

\Cref{fig:two-jobs-schedules} shows two valid schedules for two jobs released at the same time on three processors.

A subjob $u\in J_i$ is \emph{ready} at time $t$ if $r_i\le t$, $u$ has not yet been processed, and every predecessor of $u$ has been processed before time $t$. A job $J_i$ is \emph{alive} at time $t$ if $r_i\le t$ and it has at least one unprocessed subjob. The \emph{age} of an alive job $J_i$ at time $t$ is $t-r_i$.

\paragraph{Maximum Flow Objective.}
Let \(\mathcal S\) be a valid schedule of jobs \(J_1,\ldots,J_n\). For a job \(J_i\), let \(t_i\) be the last timestep in which some subjob of \(J_i\) is processed. We define the completion time of \(J_i\) under \(\mathcal S\) as \(
C_i(\mathcal S):=t_i+1.\) \footnote{Intuitively, we set the completion time to $t_i+1$ because processing during timestep $t_i$ occupies the interval $[t_i,t_i+1)$ and therefore finishes only at time $t_i+1$.}
The flow time of \(J_i\) is
\[
F_i(\mathcal S):=C_i(\mathcal S)-r_i,
\]
and the maximum flow time of \(\mathcal S\) is
\(
F_{\max}(\mathcal S):=\max_i F_i(\mathcal S).
\)
Our objective is to find a valid schedule minimizing $F_{\max}$.

\paragraph{Online Setting and Competitive Ratio.}
In the online setting, the scheduling algorithm does not know future job arrivals. At each timestep $t$, it irrevocably selects at most $m$ ready subjobs to process.

For an instance $I$, let $\OPT(I)$ denote an optimal offline schedule and define
\[
\opt(I):=F_{\max}(\OPT(I)).
\]
A deterministic online algorithm $\ALG$ is $c$-competitive if, for every instance $I$,
\[
F_{\max}(\ALG(I))\le c\cdot \opt(I).
\]
A randomized online algorithm is $c$-competitive if, for every fixed instance $I$, we have 
\(
\mathbb E\!\left[F_{\max}(\ALG(I))\right]\le c\cdot \opt(I),
\)
where the expectation is over the internal randomness of $\ALG$.

\paragraph{Non-Clairvoyant Online Algorithms.}
We are particularly interested in \emph{non-clairvoyant} online algorithms. Such an algorithm has the property that the set of subjobs that it schedules at time $t$ is a function only of:
\begin{enumerate}
\item the release times of jobs released by time $t$; and
\item the subjobs that have become ready by time $t$ and precedence relations among them. 
\end{enumerate}
In particular, the decision at time $t$ cannot depend on subjobs that have not yet become ready or on precedence relations involving such subjobs.

First-In-First-Out (FIFO) is the canonical non-clairvoyant algorithm. At each
timestep, FIFO processes $m$ ready subjobs, prioritizing those belonging to
earlier-released jobs and breaking ties arbitrarily. If fewer than $m$
subjobs are ready, it processes all of them.

\begin{figure}[t]
\centering
\begin{tikzpicture}[
    x=1cm,y=1.2cm,scale=1.15,transform shape,>=Stealth,
    vertex/.style={draw,circle,minimum size=3.8mm,inner sep=0pt,font=\scriptsize},
    edge/.style={->,line width=0.45pt},
    font=\small
]
\begin{scope}[shift={(0.2,0)}]
    \node at (1,0.95) {Job $J_1$};
    \node[vertex] (A) at (0,0.25) {$A$};
    \node[vertex] (B) at (0,-0.25) {$B$};
    \node[vertex] (C) at (1,0.5) {$C$};
    \node[vertex] (D) at (1,0) {$D$};
    \node[vertex] (E) at (1,-0.5) {$E$};
    \node[vertex] (F) at (2,0) {$F$};
    \foreach \u/\v in {A/C,A/D,B/D,B/E,C/F,D/F,E/F}
        \draw[edge] (\u) -- (\v);
\end{scope}
\begin{scope}[shift={(3.65,0)}]
    \node at (1.35,0.95) {Job $J_2$};
    \node[vertex] (a) at (0,0) {$a$};
    \node[vertex] (b) at (0.7,0.35) {$b$};
    \node[vertex] (c) at (0.7,-0.35) {$c$};
    \node[vertex] (d) at (1.4,0.35) {$d$};
    \node[vertex] (e) at (1.4,-0.35) {$e$};
    \node[vertex] (f) at (2.05,0) {$f$};
    \node[vertex] (g) at (2.7,0) {$g$};
    \foreach \u/\v in {a/b,a/c,b/d,c/d,c/e,e/f,f/g}
        \draw[edge] (\u) -- (\v);
\end{scope}

\begin{scope}[shift={(0,-1.25)},x=0.45cm,y=0.40cm]
    \foreach \t in {0,...,5}
        \node at (\t+0.5,0.5) {$\t$};
    \draw[line width=0.4pt] (0,0) rectangle (6,-3);
    \foreach \t in {1,...,5} \draw[line width=0.4pt] (\t,0) -- (\t,-3);
    \foreach \p in {1,2} \draw[line width=0.4pt] (0,-\p) -- (6,-\p);
    \foreach \p in {1,2,3}
        \node[anchor=east] at (-0.12,-\p+0.5) {$m_{\p}$};
    \foreach \t/\p/\v in {
        0/1/A,0/2/B,0/3/a,
        1/1/C,1/2/D,1/3/E,
        2/1/F,2/2/b,2/3/c,
        3/1/d,3/2/e,4/1/f,5/1/g}
        \node at (\t+0.5,-\p+0.5) {$\v$};
\end{scope}
\begin{scope}[shift={(4.05,-1.25)},x=0.45cm,y=0.40cm]
    \foreach \t in {0,...,4}
        \node at (\t+0.5,0.5) {$\t$};
    \draw[line width=0.4pt] (0,0) rectangle (5,-3);
    \foreach \t in {1,...,4} \draw[line width=0.4pt] (\t,0) -- (\t,-3);
    \foreach \p in {1,2} \draw[line width=0.4pt] (0,-\p) -- (5,-\p);
    \foreach \p in {1,2,3}
        \node[anchor=east] at (-0.12,-\p+0.5) {$m_{\p}$};
    \foreach \t/\p/\v in {
        0/1/A,0/2/a,
        1/1/B,1/2/b,1/3/c,
        2/1/C,2/2/e,
        3/1/D,3/2/E,3/3/f,
        4/1/F,4/2/d,4/3/g}
        \node at (\t+0.5,-\p+0.5) {$\v$};
\end{scope}
\end{tikzpicture}
\caption{Two DAG jobs released at time $0$, with two valid schedules on three processors. The maximum flow time is $6$ in the left schedule and $5$ in the right.}
\Description{Two job DAGs and two valid schedules on three processors. The left schedule completes the jobs by time 6, while the right schedule completes them by time 5.}
\label{fig:two-jobs-schedules}
\end{figure}

\paragraph{Residual DAGs and Basic Properties of $\opt$}
We introduce some additional notation that will be useful in our analysis. For a schedule $\mathcal S$, the \emph{residual DAG} of a job $J_i$ at time $t$ is the subgraph of $J_i$ induced by the subjobs that have not been processed before timestep $t$. We let $w_i^{\mathcal S}(t)$ denote the number of subjobs in this residual DAG, called the \emph{remaining work} of $J_i$ at time $t$, and let $\ell_i^{\mathcal S}(t)$ denote the number of subjobs on a longest directed path in the residual DAG, called the \emph{longest remaining path}. When the schedule is clear from context, we omit the superscript $\mathcal S$. 

When the instance is clear from context, we write $\opt:=\opt(I)$. We will repeatedly use the following simple consequences of the definition of $\opt$.

\begin{proposition}
\label[proposition]{prop:opt-bounds}
For every instance with optimal maximum flow time $\opt$:
\begin{enumerate}
\item every directed path in a job contains at most $\opt$ subjobs;
\item every job contains at most $m\opt$ subjobs; more generally, for integers $0\le a\le b$, the jobs released during timesteps $a,a+1,\ldots,b$ contain at most
\(    m(b-a+\opt)
   \)
subjobs in total.
\end{enumerate}
\end{proposition}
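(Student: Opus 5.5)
Both parts follow by looking at the optimal offline schedule $\OPT(I)$, which by definition completes every job $J_i$ by time $r_i+\opt$. Throughout, recall that each subjob occupies exactly one timestep, and that precedence forces strictly increasing timesteps along any edge.

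\emph{Part 1.} Take a directed path $u_1\to u_2\to\cdots\to u_k$ in some job $J_i$. The precedence constraints give
\[
\OPT(u_1)<\OPT(u_2)<\cdots<\OPT(u_k).
\]
These are $k$ distinct integer timesteps, so $\OPT(u_k)\ge \OPT(u_1)+k-1\ge r_i+k-1$. The completion time of $J_i$ is therefore at least $r_i+k$, so its flow time is at least $k$. Since the flow time is at most $\opt$, we get $k\le\opt$.

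\emph{Part 2.} We argue by counting processor capacity. Every subjob of a job released during timesteps $a,\ldots,b$ is processed by $\OPT$ at some timestep $t$ with $a\le r_i\le t\le r_i+\opt-1\le b+\opt-1$. This window contains $b-a+\opt$ timesteps, and each timestep processes at most $m$ subjobs. Hence the total number of such subjobs is at most $m(b-a+\opt)$. The single-job bound is the special case $a=b=r_i$, which gives at most $m\opt$ subjobs.

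No step is a real obstacle. The only point needing care is the off-by-one convention that completion time is $t_i+1$: the last processing step of $J_i$ is at most $r_i+\opt-1$, and this is what makes the window length exactly $b-a+\opt$ rather than $b-a+\opt+1$.
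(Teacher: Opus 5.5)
Your proof is correct and follows the paper's argument. Part~1 uses strictly increasing timesteps along a path, and part~2 counts the $m(b-a+\opt)$ processor slots in timesteps $a,\ldots,b+\opt-1$; your explicit handling of the $C_i=t_i+1$ convention just spells out what the paper leaves implicit.
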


\begin{proof}
The subjobs on a directed path must be processed sequentially. Since every job completes within $\opt$ timesteps of its release under $\OPT$, every directed path contains at most $\opt$ subjobs.

Similarly, a job can receive at most $m\opt$ units of processing during the $\opt$ timesteps between its release and completion, so every job contains at most $m\opt$ subjobs.

More generally, every job released during timesteps $a,a+1,\ldots,b$ must be completed by time $b+\opt$ under $\OPT$. Hence all such work is processed during timesteps $a,a+1,\ldots,b+\opt-1$, which contain exactly $
m(b-a+\opt)
$
processor slots.
\end{proof}

\section{Lower Bound for Non-Clairvoyant Algorithms}
\label{sec:lower-bound}
We now prove \Cref{thm:non-clairvoyant-lb} using a hard instance consisting of out-forest jobs. The key weakness of non-clairvoyance that we exploit is that, among many ready subjobs, the algorithm cannot distinguish leaves from subjobs with many children, whose completion can unlock substantial future parallelism. In the worst case, it may therefore postpone exactly those subjobs. Our construction exploits this uncertainty: the offline optimum knows the full DAGs and can keep the processors essentially fully utilized, while any non-clairvoyant algorithm is periodically forced to incur idle processor-time because it delays the subjobs that would have exposed enough new work to keep all processors busy.

\paragraph{Lower Bound Instance.}
 Given any deterministic non-clairvoyant algorithm  $\alg$, we construct an instance on which the competitive ratio of the algorithm is $\Omega(c)$. We later extend this lower bound to randomized algorithms. 

Fix a sufficiently large integer $L$ divisible by $16$, and let \[c:=\min\{\lfloor m/2 \rfloor , L/16\}, \quad \text{ and } \quad  n:=c^2mL.\]

For $i=0,\ldots,n-1$, job $J_{i+1}$ is released at time $iL$. Each job is an out-forest containing exactly $mL$ subjobs, partitioned into $\Theta(L)$ layers. Each layer, except possibly the last, is either \emph{wide}, containing $2m$ subjobs, or \emph{narrow}, containing a single subjob. The exact sequence of wide and narrow layers is not fixed in advance; instead, the adversary determines adaptively whether each newly exposed layer is wide or narrow. Every layer has one distinguished subjob, called the \emph{key subjob}. The key subjob is the unique predecessor of every subjob in the next layer, while all other subjobs in the layer are leaves. See \Cref{fig:layered-job} for an example of such a job. 

\paragraph{Choosing the Key Subjob.}
When a job arrives, $\ALG$ sees only its first layer. Each subsequent layer becomes ready when the key subjob of the current layer is processed. In every wide layer, the adversary chooses as key one of the last subjobs processed by $\ALG$. The adversary can always do this because $\ALG$ is non-clairvoyant and cannot tell which subjob in the layer is the key. Thus, for every job, $\ALG$ finishes each layer before the next becomes ready.
\paragraph{Choosing the Layer Sizes.}
We now describe how the adversary adaptively fixes the layer sizes in order to force $\ALG$ to leave processor slots idle at regular intervals. We call timesteps $iL,\ldots,(i+1)L-1$ \emph{epoch $i$}. Epochs $0,c,2c,\ldots$ are called \emph{forcing epochs}, and the first $L/2$ timesteps of each forcing epoch form its \emph{forcing interval}.

The first layer of every job is wide. Whenever a subsequent layer becomes ready, the adversary makes it narrow if the current time lies in the first forcing interval during which that job is alive. At all other times, it makes the layer wide if at least $2m$ subjobs remain; if fewer than $2m$ remain, the adversary places all remaining subjobs in one final layer.

\begin{figure}[t]
\centering
\begin{tikzpicture}[
    >=Stealth,
    vertex/.style={draw,circle,minimum size=3.8mm,inner sep=0pt,font=\scriptsize},
    key/.style={vertex},
    leaf/.style={vertex},
    dependency/.style={->,line width=0.45pt},
    spine/.style={dependency},
    font=\small
]
    \node[key]  (k1) at (0,0.88) {$\kappa_1$};
    \node[leaf] (a1) at (0,0.44) {};
    \node[leaf] (b1) at (0,0) {};
    \node[leaf] (c1) at (0,-0.44) {};
    \node[leaf] (d1) at (0,-0.88) {};

    \node[key]  (k2) at (1.55,0.88) {$\kappa_2$};
    \node[leaf] (a2) at (1.55,0.44) {};
    \node[leaf] (b2) at (1.55,0) {};
    \node[leaf] (c2) at (1.55,-0.44) {};
    \node[leaf] (d2) at (1.55,-0.88) {};

    \node[key]  (k3) at (3.10,0.88) {$\kappa_3$};

    \node[key]  (k4) at (4.65,0.88) {$\kappa_4$};
    \node[leaf] (a4) at (4.65,0.44) {};
    \node[leaf] (b4) at (4.65,0) {};
    \node[leaf] (c4) at (4.65,-0.44) {};
    \node[leaf] (d4) at (4.65,-0.88) {};

    \node[key]  (k5) at (6.20,0.88) {$\kappa_5$};

    \draw[dependency] (k1) -- (a2);
    \draw[dependency] (k1) -- (b2);
    \draw[dependency] (k1) -- (c2);
    \draw[dependency] (k1) -- (d2);
    \draw[spine] (k1) -- (k2);
    \draw[spine] (k2) -- (k3);
    \draw[spine] (k3) -- (k4);
    \draw[dependency] (k3) -- (a4);
    \draw[dependency] (k3) -- (b4);
    \draw[dependency] (k3) -- (c4);
    \draw[dependency] (k3) -- (d4);
    \draw[spine] (k4) -- (k5);

    \node[font=\scriptsize, align=center] at (0,-1.50)
        {wide\\layer};
    \node[font=\scriptsize, align=center] at (1.55,-1.50)
        {wide\\layer};
    \node[font=\scriptsize, align=center] at (3.10,-1.50)
        {narrow\\layer};
    \node[font=\scriptsize, align=center] at (4.65,-1.50)
        {wide\\layer};
    \node[font=\scriptsize, align=center] at (6.20,-1.50)
        {narrow\\layer};
\end{tikzpicture}
\caption{A layered job with wide and narrow layers. The labeled key subjob
at the top of each layer is the unique predecessor of every subjob in the
next layer. All other subjobs are leaves.}
\label{fig:layered-job}
\Description{Five consecutive layers of a job DAG: wide, wide, narrow,
wide, and narrow. Each layer has one labeled key vertex at the top, with
edges to every vertex in the next layer. All other vertices are leaves.}
\end{figure}

\begin{lemma}
\label[lemma]{lem:lower-bound-opt}
The instance $I$ produced above satisfies
$L\le \opt(I)\le 2L$.
\end{lemma}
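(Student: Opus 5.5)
The lower bound $\opt(I)\ge L$ is a pure work argument. Every job has exactly $mL$ subjobs and only $m$ processors are available, so any schedule, including $\OPT(I)$, needs at least $L$ timesteps after a job's release to finish it. This is the contrapositive of the second part of \Cref{prop:opt-bounds}. All the effort goes into the upper bound. For that I will exhibit an explicit offline schedule $\mathcal S$ in which job $J_{i+1}$, released at time $iL$, completes by time $iL+2L$. Note that the instance is fixed once $\ALG$ has been run against the adversary, so $\mathcal S$ may use full knowledge of every layer and every key.

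\textbf{Step 1 (shape of each job).} First I bound the number of layers of a job.
\begin{itemize}
\item \emph{Narrow layers.} These are created only while the job's first forcing interval, of length $L/2$, is in progress. In $\ALG$'s run each narrow layer occupies at least one timestep before the next layer appears, so a job has at most $L/2$ narrow layers.
\item \emph{Wide layers.} Each has $2m$ subjobs out of $mL$, so there are at most $L/2$ of them.
\item \emph{Final layer.} There is at most one extra final layer.
\end{itemize}
Hence the key chain has at most $L+1$ subjobs, and every other subjob is a leaf hanging off a key. Since narrow layers are confined to the first forcing interval the job sees, they form one contiguous block. I expect that a bit more care shows this block comes either at the very start, for jobs released in a forcing epoch, or after some wide layers.

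\textbf{Step 2 (the schedule).} At each timestep $\mathcal S$ first processes the next key of every alive job. At most two jobs will be alive, since each finishes within $2L$ and releases are $L$ apart. It then fills the remaining processors with ready leaves, oldest job first. Under this rule job $J_{i+1}$ reveals its entire structure by time $iL+L+1$, and after that all its remaining subjobs are ready.

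\textbf{Step 3 (completion by induction on $i$).} Assume every earlier job finishes within $2L$ of its release. During $[iL,(i+2)L)$ the only competing work is the unfinished part of $J_i$, which is due by $(i+1)L$, and $J_{i+2}$, which arrives at $(i+1)L$ but has lower priority except for one key per step. I will show that processors are essentially fully busy on $J_i\cup J_{i+1}$ during this window until $J_{i+1}$ is done. Then at most $mL$ plus at most $2L$ units (competing keys plus the leftover of $J_i$) of capacity is used before $J_{i+1}$ completes, and $2mL$ slots suffice for this since $m\ge 2$ and $L$ is large. Small slack, such as the $+1$ from the final layer, can be absorbed by making leaf-filling slightly more careful.

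\textbf{Main obstacle.} The hard step is Step 3's claim that the processors never go idle for lack of ready leaves while $J_{i+1}$ is unfinished. The danger is that $J_{i+1}$ starts with a long run of narrow layers, exposing only one key per step. My plan is a counting argument to rule this out.
\begin{itemize}
\item The first layer is always wide, so $2m-1$ leaves are ready immediately.
\item Each later wide layer exposed on the key chain contributes $2m-1$ more leaves.
\item Leftover leaves of $J_i$ are also available, since $J_i$ is old.
\end{itemize}
If this balance fails at the start of a narrow block, I will instead let $\mathcal S$ deliberately defer leaves of earlier wide layers, using only about $m/2$ per step, so they last through the at most $L/2$ narrow steps. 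I will then check that the total capacity still fits in the window of length $2L$, which is where the factor $2$ comes from.
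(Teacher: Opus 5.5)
Your lower bound and the plan of exhibiting an explicit schedule are fine, but Step~3 has a genuine gap.

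\textbf{The leftover of $J_i$ is not $O(L)$.} Your capacity count treats the leftover of $J_i$ at time $iL$ as $O(L)$, but in general it is $\Theta(mL)$. Under your own rule it is already about $mL/2$ after epoch~$0$ when $J_1$'s first wide layer is followed by a narrow block of length close to $L/2$. Moreover, the unfinished work at epoch boundaries never decreases, since each epoch releases $mL$ work and offers only $mL$ slots. The only bound you can carry through an induction is $(m-1)L$, which is exactly its value in the paper's schedule. With that bound, the demand on $[iL,(i+2)L)$ is up to $(m-1)L+mL+L=2mL$, which equals the available capacity. So ``essentially fully busy'' leaves no room. The count closes only if every idle slot in the window is offset by a correspondingly smaller leftover of $J_i$, and your sketch does not establish this trade-off.

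\textbf{Your remedy for the main obstacle fails.} When $J_{i+1}$'s narrow block directly follows its first wide layer, only $2m-1$ of its own leaves exist to cover up to $L/2$ narrow steps. Rationing them cannot keep processors busy.

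\textbf{The layer count.} Your bound of $L+1$ layers can be sharpened to $L$. If a narrow layer exists, fewer than $mL$ subjobs lie in the other layers, so there are at most $L/2-1$ full wide layers plus possibly the partial one. If no narrow layer exists, $2m$ divides $mL$, so there is no partial layer. This exact bound is what lets a key chain fit into one epoch.

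\textbf{The missing idea.} Argue epoch by epoch rather than over a window of length $2L$, and stop trying to keep processors busy with $J_{i+1}$'s work during its release epoch. Maintain the invariant that at time $(i+1)L$:
\begin{itemize}
\item $J_i$ is finished;
\item every key of $J_{i+1}$ has been processed;
\item at least $L$ subjobs of $J_{i+1}$ are done.
\end{itemize}
The paper secures this by reserving one processor for $J_{i+1}$ during epoch $i$. That processor handles its keys in order, then any ready subjob, and this works because the key chain has at most $L$ subjobs. Hence at most $(m-1)L$ subjobs of $J_{i+1}$ remain at time $(i+1)L$, and all of them are ready. The other $m-1$ processors clear them during epoch $i+1$, while the reserved processor moves on to $J_{i+2}$. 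Idle slots during a narrow block are then harmless.

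Your keys-first, oldest-leaves-first rule can also be shown to preserve this invariant. However, that requires this per-epoch argument, not the global capacity count.
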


\begin{proof}
Each job contains $mL$ subjobs, and therefore has flow time at least $L$ in any schedule.

We next give a schedule with maximum flow time at most $2L$. First observe that every job has at most $L$ layers. During the first forcing interval in which a job is alive, at most $L/2$ narrow layers can be created, since the interval has length $L/2$ and at most one new layer can become ready in each timestep. Every other layer contains $2m$ subjobs, except possibly for one final smaller layer, so there are at most $L/2$ such layers. Hence every job has at most $L$ layers in total.

Consider the following offline schedule. During epoch $i$, reserve one processor for $J_{i+1}$. This processor repeatedly processes the key subjob of the current layer until all layers have become ready, and then processes arbitrary ready subjobs of $J_{i+1}$. Since $J_{i+1}$ has at most $L$ layers, by the end of epoch $i$ all of its subjobs are ready, and the reserved processor has processed exactly $L$ subjobs of $J_{i+1}$.

During epoch $i+1$, the other $m-1$ processors process the remaining $(m-1)L$ subjobs of $J_{i+1}$, while the reserved processor begins processing $J_{i+2}$. Thus every job completes by time $(i+2)L$, at most $2L$ timesteps after being released at time $iL$. Therefore $\opt(I)\le 2L$.
\end{proof}
The next lemma shows that, as long as at most $c$ jobs are alive, each forcing interval causes $\ALG$ to waste some processor capacity.
\begin{lemma}
\label[lemma]{lem:forcing-epoch}
Let $i<n$ be a forcing epoch. If every alive job has age less than $cL$ at the beginning of epoch $i$, then the forcing interval of epoch $i$ contains an idle processor-time slot.
\end{lemma}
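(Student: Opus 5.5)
The approach is a counting argument. Every job alive during the forcing interval of epoch $i$ is in its \emph{first} forcing interval, so every layer revealed in the interval is narrow. The total work that can become available to $\ALG$ in the interval is then too small to fill all $m\cdot L/2$ processor slots.

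\emph{Step 1: identify the alive jobs.} Every job alive at time $iL$ has age less than $cL$, so it was released at time $jL$ for some $j$ with $i-c<j\le i-1$. Adding the job released exactly at $iL$ (if $i<n$), and noting that no other job is released during the forcing interval (arrivals are at multiples of $L$ and the interval is $[iL,iL+L/2)$), at most $c$ jobs are alive at any point of the interval. For each such job, the previous forcing epoch is $i-c$, which precedes its release, so it has not been alive during any earlier forcing interval. Hence, by the adversary's rule, every layer of these jobs that becomes ready during the interval is narrow.

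\emph{Step 2: bound the work each job can expose during the interval.} Since the adversary picks each wide layer's key among the last subjobs $\ALG$ processes, each job has at most one layer ready at any time. At time $iL$ each alive job therefore has at most $2m$ ready subjobs. This holds whether its current layer is wide, a final smaller layer, or the wide first layer of the newly released job. Every layer revealed afterwards within the interval is narrow and contributes exactly one subjob. At most one new layer per job can become ready per timestep, so at most $L/2$ narrow layers appear per job. The total number of subjobs $\ALG$ can process during the interval is therefore at most
\[
c\,(2m+L/2)\;\le\; \tfrac{L}{16}\cdot 2m+\tfrac{m}{2}\cdot\tfrac{L}{2}\;=\;\tfrac{mL}{8}+\tfrac{mL}{4}\;<\;\tfrac{mL}{2},
\]
using $c\le L/16$ for the first term and $c\le m/2$ for the second.

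\emph{Step 3: conclude.} The forcing interval contains $m\cdot L/2$ processor slots but fewer than $mL/2$ processable subjobs, so some slot is idle. The step that needs the most care is Step 1. There I must check that the forcing interval of epoch $i$ is genuinely the \emph{first} forcing interval for every relevant job, including the job released at $iL$ itself and a layer revealed exactly at time $iL$. I also need to confirm that jobs finishing mid-interval, or final short layers, only reduce available work and so do not break the bound.
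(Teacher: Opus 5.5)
Your proposal is correct and follows the paper's proof essentially line for line. At most $c$ jobs are alive, and the interval is each one's first forcing interval, so every newly revealed layer is narrow; each job therefore contributes at most $2m+L/2$ processed subjobs, and $c(2m+L/2)\le 3mL/8<mL/2$.
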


\begin{proof}
Immediately after $J_{i+1}$ is released, let $q$ be the number of alive jobs.
Since each has age less than $cL$, we have $q\le c$, and this is the first
forcing interval of every such job. Each job initially has at most one ready layer, of size at most $2m$, and every later layer that becomes ready during the interval is narrow. Thus $\ALG$ processes at most $2m+L/2$ subjobs from each job, and hence

$$
q\cdot\left(2m+\frac L2\right)
\le \frac L{16}\cdot 2m+\frac m2\cdot\frac L2
=\frac{3mL}{8}<\frac{mL}{2}.
$$

Here $c\le L/16$ bounds the first term, and $c\le m/2$ bounds the second.
The forcing interval contains $mL/2$ processor-time slots and no further job arrivals. Therefore at least one processor-time slot is idle.
\end{proof}

\begin{proof}[Deterministic case]
Let $B_i$ be the total unfinished work immediately before time $iL$, so $B_0=0$. Each epoch releases $mL$ work and contains $mL$ processor-time slots, so $B_{i+1}\ge B_i$. Suppose, for contradiction, that every job has flow time less than $cL$. Then \Cref{lem:forcing-epoch} applies to every forcing epoch, and at least one processor-time slot is idle in each such epoch. Hence $B_{i+1}\ge B_i+1$ for every forcing epoch $i$.

The forcing epochs among the first $n$ epochs are $0,c,\ldots,n-c$, so there are exactly $n/c=cmL$ of them. It follows that $B_n\ge cmL$. On the other hand, at time $nL$, every job released by time $(n-c)L$ has age at least $cL$ and hence must be complete. Thus only the last $c-1$ jobs can remain unfinished, giving $B_n\le(c-1)mL$, a contradiction.

Therefore some job has flow time at least $cL$. By \Cref{lem:lower-bound-opt}, the instance has optimal maximum flow time $\Theta(L)$. Hence $\ALG$ has competitive ratio $\Omega(c)=\Omega(\min\{m,\opt(I)\})$, proving the theorem.
\end{proof}

We now extend the lower bound to randomized non-clairvoyant algorithms via Yao's principle.

\begin{proof}[Randomized case]
By Yao's principle, it suffices to construct a distribution over fixed instances under which every deterministic non-clairvoyant algorithm has expected competitive ratio $\Omega(c)$.

\paragraph{The Hard Distribution.}
Let $\mathcal I$ be the finite family of fixed instances that can arise from the adaptive construction above as the behavior of the deterministic algorithm varies, and let $N:=|\mathcal I|$. Sample $N$ instances independently and uniformly from $\mathcal I$. We place the sampled instances one after another, shifting their release times so that the first job of each sample is released $2cL$ timesteps after the last job of the preceding sample.

Every instance in the support of this distribution has optimal maximum flow time at most $2L$. Indeed, by \Cref{lem:lower-bound-opt}, each sampled instance can be scheduled with maximum flow time at most $2L$, and the $2cL$ gap between consecutive samples prevents these schedules from overlapping.

Fix a deterministic non-clairvoyant algorithm $\ALG$. Consider the $i$th sampled instance, conditioned on all jobs from the first $i-1$ samples having flow time less than $cL$. By the separation between samples, all earlier jobs are complete before the $i$th sample begins. Conditioned on the history so far, the continuation of $\ALG$ is again a deterministic non-clairvoyant algorithm. By the deterministic lower bound, there is therefore at least one instance in $\mathcal I$ that forces flow time at least $cL$. Since the $i$th sample is uniform over the $N$ instances in $\mathcal I$, this occurs with conditional probability at least $1/N$.

Thus the probability that none of the $N$ samples forces flow time at least $cL$ is at most $(1-1/N)^N\le e^{-1}$. Hence, with $\Omega(1)$ probability, $\ALG$ incurs maximum flow time at least $cL$. Since every instance in the support has $\opt=O(L)$, every deterministic non-clairvoyant algorithm has expected competitive ratio $\Omega(c)$. By Yao's principle, the same lower bound holds for randomized non-clairvoyant algorithms.
\end{proof}

\section{An Optimal Non-Clairvoyant Algorithm}
\label{sec:upper-bounds}

We prove matching upper bounds using two ingredients. FIFO has
competitive ratio $O(m)$, while a round-based algorithm has competitive ratio
$O(\opt)$. We first analyze the latter algorithm assuming that $\opt$ is known
and then remove this assumption by doubling.

\subsection{FIFO is \texorpdfstring{$O(m)$}{O(m)}-Competitive}
\label{sec:fifo-upper-bound}

We now show that FIFO has $O(m)$-competitive ratio.

\begin{theorem}[FIFO upper bound]
\label{thm:fifo-upper-bound}
The competitive ratio of FIFO is at most
\(\left(m+1-\frac1m\right).
\)
\end{theorem}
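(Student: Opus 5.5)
The plan is to fix a job $J_k$ with release time $r$ and bound its FIFO flow time by charging each timestep in $[r,C_k)$ to one of two budgets: total work and sequential depth. Call a timestep $t\in[r,C_k)$ \emph{full} if FIFO uses all $m$ processors on subjobs of jobs released no later than $r$. Otherwise FIFO had spare capacity after serving every job of priority at least that of $J_k$. Hence it processes \emph{every} ready subjob of $J_k$ at $t$, and such a timestep is a \emph{path step}.

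\textbf{Path steps.} In a path step, every source of $J_k$'s residual DAG is processed, so $\ell_k(t+1)\le \ell_k(t)-1$. By \Cref{prop:opt-bounds}(1), $\ell_k(r)\le\opt$, so there are at most $\opt$ path steps.

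\textbf{Full steps.} Let $W$ be the total unprocessed work at time $r$ of jobs released at or before $r$, including $J_k$. Each full step removes $m$ units of this work, so the number of full steps is at most $W/m$. To bound $W$, I will take $t_0<r$ to be the last timestep before $r$ at which FIFO leaves a processor idle; if no such step exists, set $t_0=-1$. Two observations control $W$.
\begin{enumerate}
\item At $t_0$, every ready subjob is processed, and every alive job has at least one ready subjob. Hence at most $m-1$ jobs are alive at $t_0$. Each of them has at most $m\opt$ subjobs by \Cref{prop:opt-bounds}(2), and at least one of those subjobs is processed at $t_0$. So after $t_0$ these jobs contribute at most $(m-1)(m\opt-1)$ remaining work.
\item Jobs released during $t_0+1,\dots,r$ contain at most $m(r-t_0-1+\opt)$ subjobs by \Cref{prop:opt-bounds}(2). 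Steps $t_0+1,\dots,r-1$ are all fully busy by the choice of $t_0$, and FIFO works only on these old jobs during them, so they remove $m(r-t_0-1)$ units.
\end{enumerate}
Together these give $W\le (m-1)(m\opt-1)+m\opt$, so the number of full steps is at most $m\opt-(m-1)/m$.

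Adding the two budgets gives $F_k\le (m+1)\opt - (m-1)/m$. This already proves $O(m)$-competitiveness and is within an additive lower-order term of the claimed bound.

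The main obstacle is sharpening the constant to exactly $m+1-\tfrac1m$. I expect the slack to come from double counting between the two budgets. $J_k$'s own work appears in $W$ even though its path steps also process its subjobs. Likewise, the at most $m-1$ jobs alive at $t_0$ have already received processing that \Cref{prop:opt-bounds} does not account for.

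I would first try to refine the count of full steps. The work done on $J_k$ during path steps, at least one unit per path step, can be subtracted from $W$. The jobs alive at $t_0$ can be charged more carefully: each of them has age at least $1$ at $t_0+1$ and must have at most $m(\opt-1)$ remaining work if OPT finishes it in time, using \Cref{prop:opt-bounds} applied to the window from its release.

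If that accounting still leaves an $O(\opt/m)$ gap, I would instead redefine a full step so that it excludes $J_k$ itself. The path-step budget then absorbs all of $J_k$'s own processing, and $W$ only needs to cover jobs released strictly before $J_k$ (or tied with it) that are scheduled ahead of it. Combining this with the ordering among jobs released at the same time should yield the stated $\left(m+1-\frac1m\right)\opt$.
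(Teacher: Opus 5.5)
Your decomposition is the paper's: full steps on jobs released by $r$, versus path steps in which every ready subjob of $J_k$ runs, with at most $\opt$ of the latter. Your bound $W\le m^2\opt-m+1$ via the last idle step $t_0$ is a valid, slightly sharper variant of \Cref{lem:fifo-remaining-work}, which instead uses the first time from which at least $m$ jobs are continuously alive. But as written the proposal does not prove the theorem. Your bound $(m+1)\opt-\frac{m-1}{m}$ exceeds $\left(m+1-\frac1m\right)\opt$ whenever $\opt>m-1$, and you leave the sharpening as a tentative plan.

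The missing step is your first refinement, and it closes the proof in one line. In every path step $t<C_k$, the job $J_k$ is alive, so it has a ready subjob, which FIFO processes. Hence each path step removes at least one unit of $W$, in addition to the $m$ units removed by each full step. Thus $mF+P\le W$, and with $P\le\opt$,
\[
F+P=\frac{mF+P}{m}+\Bigl(1-\frac1m\Bigr)P\le m\opt-\frac{m-1}{m}+\Bigl(1-\frac1m\Bigr)\opt\le\Bigl(m+1-\frac1m\Bigr)\opt .
\]
This is exactly the paper's inequality $mF+P\le Q(r)\le m^2\opt$.

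Your other ideas are unnecessary, and neither would rescue the bound on its own. Recharging the jobs alive at $t_0$ saves only an additive amount independent of $\opt$, so it cannot close a gap of order $\opt/m$. Moreover, the claimed $m(\opt-1)$ bound does not follow from age plus OPT's deadline: OPT's progress says nothing about FIFO's residual work. You would instead need that every subjob remaining after the idle step $t_0$ has depth at least $2$, so OPT must place it in the last $\opt-1$ steps of that job's window. Redefining full steps to exclude $J_k$ breaks the path-step claim, since a step can then be non-full while $J_k$ still has unprocessed ready subjobs.
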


The proof rests on a simple observation. Whenever at least $m$ jobs are alive,
each has a ready subjob, so FIFO keeps all $m$ processors busy. FIFO therefore
processes work at the maximum possible rate, and hence no slower than the
optimal schedule. We can use this observation to show that the total remaining work never grows beyond a threshold.

Let $Q(t)$ denote the total
remaining work of the jobs alive under FIFO (including subjobs not yet revealed) at time $t$.

\begin{lemma}
\label[lemma]{lem:fifo-remaining-work}
At every integer time $t$, we have
\(
Q(t)\le m^2\opt.
\)
\end{lemma}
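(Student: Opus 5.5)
The plan is to compare FIFO's remaining work against the work the optimal schedule could possibly have outstanding. The argument rests on \Cref{prop:opt-bounds} and the observation that FIFO keeps all processors busy whenever at least $m$ jobs are alive. Fix an integer time $t$. If $Q(t)=0$ there is nothing to prove. Otherwise, let $s\le t$ be the last integer time such that during timestep $s-1$ FIFO had an idle processor. If no such time exists, set $s=0$, since before any processing everything trivially holds with $Q(0)$ equal to the work released at time $0$, which is at most $m\opt$.

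\textbf{Step 1: bounding $Q$ right after an idle step.} Suppose FIFO left a processor idle at timestep $s-1$. Then fewer than $m$ subjobs were ready, so fewer than $m$ jobs were alive with ready subjobs. Every alive job always has a ready subjob, so at most $m-1$ jobs were alive at time $s-1$. By \Cref{prop:opt-bounds}(2) each contains at most $m\opt$ subjobs. The jobs released at time $s$ itself (if any) must be counted separately. I plan to handle this by defining the bound at the start of the idle step, i.e.\ at time $s-1$ including releases at $s-1$. Then $Q(s-1)\le (m-1)m\opt$ before processing, and after the idle step the alive jobs have lost at least some work.

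\textbf{Step 2: busy interval.} During timesteps $s,\dots,t-1$ every processor is busy, so FIFO processes exactly $m(t-s)$ subjobs. The new work arriving during $[s,t]$ is, by \Cref{prop:opt-bounds}(2) applied to the release window $s,\dots,t$, at most $m(t-s+\opt)$. Thus
\[
Q(t)\le Q_{\text{old}} + m(t-s+\opt) - m(t-s) = Q_{\text{old}}+m\opt ,
\]
where $Q_{\text{old}}$ is the remaining work just after the idle step of jobs released before $s$. Combined with Step 1, this gives $Q(t)\le (m-1)m\opt+m\opt=m^2\opt$.

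\textbf{Main obstacle: boundary bookkeeping.} The delicate part is the off-by-one accounting. I must make sure jobs released exactly at the idle timestep are not double counted: they are either among the at most $m-1$ alive jobs, or included in the release window. I also need to handle the case where a released job remains unrevealed but alive. The cleanest approach is to choose the window $a=s$ or $a=s-1$ consistently, so that the idle step's alive jobs, including ones released at that step, number at most $m-1$. All later releases then fall in the window $[s,t]$, with the $\opt$ slack absorbing the single extra term. I will verify that the jobs alive at the idle step contribute at most $(m-1)m\opt$ work and the window contributes the final $m\opt$.
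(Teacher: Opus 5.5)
Your argument is correct and is essentially the paper's proof. You anchor at a time $s-1$ when at most $m-1$ jobs are alive, so $Q(s-1)\le (m-1)m\opt$, and FIFO is fully busy during timesteps $s,\dots,t-1$, so the $m(t-s)$ processed subjobs cancel against the release bound $m(t-s+\opt)$ from \Cref{prop:opt-bounds}. The only difference is cosmetic: you take $s-1$ to be the last idle timestep and deduce few alive jobs from idleness, whereas the paper takes $s$ to be the start of the maximal run with at least $m$ alive jobs and deduces full busyness from that, so the paper needs a separate case for fewer than $m$ alive jobs at time $t$ and you do not.
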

\begin{proof}
If fewer than $m$ jobs are alive after the releases at time $t$, then
\Cref{prop:opt-bounds} bounds their total remaining work by
$(m-1)m\opt$ and the claim holds.

Suppose otherwise that at least $m$ jobs are alive at time $t$. Let $s\le t$ be the first timestep such that at least $m$ jobs are
alive at every time from $s$ through $t$. The minimality of
$s$ gives $Q(s-1)\le(m-1)m\opt$ (where $Q(-1)=0$ by convention).
By
\Cref{prop:opt-bounds}, the jobs released from timestep $s$ through
$t$ contain at most $m(t-s+\opt)$ subjobs. FIFO is fully busy during timesteps
$s,\ldots,t-1$, and therefore processes $m(t-s)$ subjobs. Hence
\[
Q(t)
\le
(m-1)m\opt+m(t-s+\opt)-m(t-s)
= m^2\opt. \qedhere
\]
\end{proof}

\begin{proof}[Proof of \Cref{thm:fifo-upper-bound}]
Fix a job $J$ released at time $r$, let $C$ be its completion time. Let
$G$ be the jobs released by time $r$, including $J$.

Call a timestep from $r$ through $C-1$ \emph{full} if FIFO processes subjobs
of jobs in $G$ on all $m$ processors, and \emph{partial} otherwise. Let $F$
and $P$ be the numbers of full and partial timesteps, respectively. It
suffices to bound $F$ and $P$.

In every partial timestep, FIFO processes every ready subjob of $J$. Every
longest path in the residual DAG begins at a ready subjob, so the longest
remaining path of $J$ decreases by at least one. Its initial length is at most
$\opt$ by \Cref{prop:opt-bounds}, and hence $P\le\opt$.

Every full timestep processes $m$ subjobs belonging to jobs in $G$, and every
partial timestep processes at least one subjob of $G$. Thus
\begin{equation}
\label{eq:fifo-full-partial-work}
mF+P\le Q(r)\le m^2\opt.
\end{equation}
The second inequality follows from \Cref{lem:fifo-remaining-work}. Therefore,
by~\eqref{eq:fifo-full-partial-work} and $P\le\opt$, we conclude
\[
C-r=F+P
\le m\opt+\left(1-\frac1m\right)P
\le \left(m+1-\frac1m\right)\opt. \qedhere
\]
\end{proof}

\subsection{Small-Frontier-First: An \texorpdfstring{$O(\opt)$}{O(opt)}-Competitive Algorithm}

The lower-bound construction above uses instances in which optimal maximum
flow time is large. Assuming we are given a priori value of $\opt$, we
now show a non-clairvoyant algorithm, Small-Frontier-First ($\SFF$), that is $O(\opt)$-competitive when
$\opt$ is bounded.

\begin{theorem}
\label{thm:opt-sensitive}
Given $\opt$, there is a deterministic non-clairvoyant algorithm $\SFF$ for online DAG
scheduling on $m$ identical processors with $F_{\max}(\SFF)\le 44\,\opt^{2}$.
\end{theorem}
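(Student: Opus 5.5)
We define $\SFF$ along the lines of the technical overview and then bound the flow time of an arbitrary batch.

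\emph{The algorithm.} Partition time into windows of length $2\opt$. All jobs released in one window form a \emph{batch}, which becomes available at the end of its window. Batches older than the current one are processed in \emph{rounds}. At the start of a round, call a batch \emph{small} if its current ready frontier (ready subjobs of its jobs) has at most $m$ subjobs, and \emph{large} otherwise. The round processes the entire current frontier of every small batch, together with the subjobs of those frontiers. Each timestep of the round fills the remaining processors with ready subjobs of the oldest large batch, then the next oldest, and so on. Concretely, the round lasts $\lceil(\sum_{\text{small}}|\text{frontier}|)/m\rceil$ timesteps, at least one. In every round each small batch shortens its longest remaining path by at least one, since every longest path starts at a ready subjob.

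\emph{Key steps, in order.}
\begin{enumerate}
\item By \Cref{prop:opt-bounds}, every job in a batch has longest path at most $\opt$. Hence a batch can be small in at most $\opt$ rounds before it finishes.
\item Bound the number of small batches at any time by $O(\opt)$. A batch can become small only when it first joins or when it drops from large to small. Moreover, the only large batch that gets processed substantively is the oldest large batch. I would argue that at most one batch enters the small state per round, besides newly arriving batches. Summing over the $\le\opt$ small rounds each batch lives, the active small set has size $O(\opt)$. Each small batch contributes at most $m$ frontier subjobs, so each round lasts $O(\opt)$ timesteps.
\item Potential/work argument. If a round involves a large batch, every processor is busy, so the algorithm processes work at full rate. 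Otherwise only small batches remain, whose total remaining work is at most $O(\opt)$ batches times $m\cdot O(\opt)$ work each. Here a batch from a $2\opt$ window has at most $3m\opt$ work by \Cref{prop:opt-bounds}. This gives $O(m\opt^2)$ backlog. As in \Cref{lem:fifo-remaining-work}, let $s$ be the start of the last maximal full stretch. Arrivals during the stretch are matched by full-rate processing, so the backlog stays $O(m\opt^2)$ throughout.
\item Fix a batch $B$. Its lifetime splits into three parts:
\begin{itemize}
\item \emph{Rounds where $B$ is small.} There are at most $\opt$ of them, each of length $O(\opt)$, giving $O(\opt^2)$.
\item \emph{Rounds where $B$ is large and some older large batch exists.} Each timestep is full with work of batches no younger than $B$, apart from small-batch frontiers. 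The older-large work is bounded by the backlog, giving $O(\opt^2)$ timesteps.
\item \emph{Interruptions by younger small batches.} Each younger batch is small for at most $\opt$ rounds. Only $O(\opt)$ younger batches arrive during $B$'s lifetime, if that lifetime is $O(\opt^2)$; this would be closed by a bootstrapping/induction on batches. Their $O(m\opt^2)$ frontier work costs $O(\opt^2)$ timesteps.
\end{itemize}
Adding the $2\opt$ delay of the window and tracking constants should give $44\opt^2$.
\end{enumerate}

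\emph{Main obstacle.} I expect step 2 to be the hardest: proving that the number of simultaneously small batches, and hence the round length, is $O(\opt)$. It is tempting to charge each newly small batch to one round, but a round may cause several large batches to become small at once, namely any batch whose frontier shrinks. I would first try to show that only the oldest large batch's frontier can change during a round; other large batches are untouched, so their frontiers stay above $m$. Batches freshly arriving are at most one per window, and windows have length $2\opt$. Together these give at most $O(1)$ new small batches per round. The circularity between round lengths and batch lifetimes in step 4 I would resolve by induction over batches in release order, using the claimed $O(\opt^2)$ lifetime for older batches to cap how many batches coexist.
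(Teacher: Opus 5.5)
\textbf{Overall.} Your outline has the same structure as the paper's proof: the same algorithm, the observation that a batch is small in at most $\opt$ rounds, at most one large-to-small transition per boundary, the backlog bound via the last boundary with no large batch, and the split of $B$'s lifetime into work of batches no younger than $B$ versus younger small-batch work.

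\textbf{The gap: closing the circularity in step 4.} Induction over batches in release order cannot do this. What must be bounded is the number of batches \emph{younger} than $B$ that become eligible while $B$ is alive. That number depends only on $B$'s own lifetime $T$, and a hypothesis about older batches' lifetimes gives no control over it. A bootstrap of the form ``if $T=O(\opt^2)$ then younger small work is $O(m\opt^2)$, so $T=O(\opt^2)$'' is circular unless the constants contract. You never identify why they do.

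The missing idea is that younger small batches can occupy at most half the machine, which is exactly why batches are cut from windows of length $2\opt$.
\begin{itemize}
\item Over $T$ steps at most $T/(2\opt)+1$ younger batches become eligible. Each occupies at most $m\opt$ slots while small, so together they use at most $mT/2+m\opt$ slots.
\item Younger large batches are served, and idle slots occur, only in the last step of a round in which $B$ is small. (If $B$ is large, the oldest large batch is no younger than $B$ and every step is full.) This adds at most $m\opt$ slots.
\item The batches admitted by $B$'s admission boundary hold at most $16m\opt^2$ work.
\end{itemize}
Combining these gives $mT\le 16m\opt^2+mT/2+2m\opt$, hence $T\le 32\opt^2+4\opt$, with no induction at all. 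Applying the same inequality to every finite prefix of $B$'s lifetime also shows that $B$ finishes. With windows of length $\opt$, the coefficient of $T$ on the right would be $1$ and the argument would yield nothing.

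\textbf{The same issue in step 2.} There it appears in milder form: ``$O(1)$ new small batches per round'' presupposes that rounds have length $O(\opt)$. This closes by induction over rounds in time order, with an explicit constant.
\begin{itemize}
\item Assume every earlier round has length at most $4\opt$. Then the $\opt$ preceding rounds span at most $4\opt^2$ steps, so at most $2\opt+1$ batches became eligible in them.
\item Add at most $\opt$ large-to-small transitions. At most $3\opt+1\le 4\opt$ batches are then small, so the current round also has length at most $4\opt$.
\end{itemize}
Again, the factor $1/2$ from the $2\opt$ windows is what makes this close.

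\textbf{Smaller points.}
\begin{itemize}
\item Your case (b) should cover every round in which $B$ is large, including those in which $B$ is itself the oldest large batch.
\item The flow bound must add the admission delay of up to one round ($\le 4\opt$), not only the $2\opt$ window.
\item Charging entire rounds in which $B$ is small ($\le 4\opt^2$ steps) leads to $40\opt^2+8\opt$, which exceeds $44\opt^2$ at $\opt=1$. To obtain $44\opt^2$, charge only the at most $m$ filler or idle slots in the last step of each such round, as the paper does.
\item Your ``then the next oldest, and so on'' filler rule is harmless only because spare processors arise solely in the last step of a round, while the oldest large batch still has more than $m$ untouched ready subjobs. Your one-transition-per-boundary claim relies on this, so state it explicitly.
\end{itemize}
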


\paragraph{Arrival batches.}
For each integer $j\ge 0$, the jobs released in the $2\opt$-length window $[2j\opt,\,2(j+1)\opt)$ form a
batch $\mathcal B_j$, whose DAG is the disjoint union of those jobs. A batch $\mathcal B_j$ becomes \emph{eligible} at its eligibility time $e_j=2(j+1)\opt$ and is \emph{admitted} at the
first round boundary at or after $e_j$.  A batch $\mathcal B_i$ is considered
\emph{older} than a batch $\mathcal B_j$ if their eligibility times give $e_i<e_j$.  

\begin{lemma}[Batch bounds]
\label[lemma]{lem:batch-bounds}
Each batch has at most $3m\opt$ subjobs and depth at most $\opt$.  For integers $u\le v$, at most $(v-u)/(2\opt)+1$ batches become eligible in
$(u,v]$, and together they hold at most $m(v-u+3\opt)$ total subjobs.
\end{lemma}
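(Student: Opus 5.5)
All four claims follow from \Cref{prop:opt-bounds}, applied to the release windows that define the batches. The plan is to translate each claim about batches into a claim about the jobs released in some window of timesteps, and then invoke that proposition.

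\emph{Size and depth of a single batch.} The batch $\mathcal B_j$ consists of the jobs released during timesteps $a=2j\opt$ through $b=2(j+1)\opt-1$. Part~2 of \Cref{prop:opt-bounds} bounds their total number of subjobs by
\[
m(b-a+\opt)=m(2\opt-1+\opt)\le 3m\opt .
\]
The DAG of $\mathcal B_j$ is a disjoint union of jobs, so every directed path lies inside a single job. Part~1 of \Cref{prop:opt-bounds} then gives depth at most $\opt$.

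\emph{Number of batches eligible in $(u,v]$.} Eligibility times are $e_j=2(j+1)\opt$, so they form an arithmetic progression with common difference $2\opt$. Suppose $k$ of them lie in $(u,v]$. Then the first and last differ by $2\opt(k-1)$, and this difference is less than $v-u$. Hence $k<(v-u)/(2\opt)+1$, which gives the stated bound. If there are none, the bound is trivial.

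\emph{Total work of those batches.} The batches with $e_j\in(u,v]$ are exactly those with $u<2(j+1)\opt\le v$. Their jobs are released in windows $[2j\opt,2(j+1)\opt)$. The union of these windows is contained in the integer timesteps from $a:=\max\{0,u+1-2\opt\}$ through $b:=v-1$. This holds because the earliest such window starts at $e_j-2\opt>u-2\opt$, and the latest ends strictly before $e_j\le v$. Part~2 of \Cref{prop:opt-bounds} then gives at most
\[
m(b-a+\opt)\le m\bigl(v-1-(u+1-2\opt)+\opt\bigr)\le m(v-u+3\opt)
\]
subjobs. If $a>b$, the set of batches is empty and there is nothing to prove.

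\emph{Main obstacle.} The only delicate step is the off-by-one bookkeeping: matching the half-open window convention to the integer-timestep form of \Cref{prop:opt-bounds}, and handling the truncation at $0$. Taking $a=\max\{0,\cdot\}$ only shrinks $b-a$, so the bound is unaffected.
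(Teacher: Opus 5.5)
Your proposal is correct and follows essentially the same route as the paper. Both apply \Cref{prop:opt-bounds} to the $2\opt$-length release windows, place the windows of batches eligible in $(u,v]$ inside $[u-2\opt,v)$, and count those batches using the $2\opt$ spacing of eligibility times. Your version just makes explicit the off-by-one and truncation-at-zero bookkeeping that the paper leaves implicit.
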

\begin{proof}

The \emph{depth} of a batch is the number of
vertices on the longest directed path of its DAG. Since each batch consists of jobs released in a $2\opt$-length window,
\Cref{prop:opt-bounds} gives depth at most $\opt$ and at most $3m\opt$ subjobs.

A batch $\mathcal B$ is eligible in $(u,v]$ only when its $2\opt$-length window ends within that interval, so
the windows of all such batches lie within $[u-2\opt,\,v)$ and hold at most
$m(v-u+3\opt)$ subjobs by \Cref{prop:opt-bounds}. Consecutive batch windows are $2\opt$ apart, so at most
$(v-u)/(2\opt)+1$ can become eligible in the interval.
\end{proof}

\paragraph{The algorithm.}
We construct blocks of consecutive
steps called \emph{rounds}. A round's endpoints are its \emph{boundaries}. $\SFF$ acts only at boundaries. At a boundary, an unfinished admitted batch is \emph{small} if it has at
most $m$ ready subjobs and \emph{large} otherwise. These labels use only the
subjobs ready at that boundary and are recomputed each round. $\SFF$ performs the following routine each round. 

\begin{algorithm}[H]
\renewcommand{\thealgorithm}{}
\captionsetup{labelformat=empty,labelsep=none}
\caption{\textbf{Small-Frontier-First Algorithm} -- single-round routine}
\label{alg:optupper}
\begin{algorithmic}[1]
\State Admit all eligible batches.
\State Let $P$ be the ready subjobs of the small batches.
\State Let $\mathcal{B}^*$ be the oldest large batch, if one exists.
\If{$P \neq \emptyset$}
    \State Process $P$ in a fixed order over $\lceil |P|/m\rceil$ steps.
    \State In the last step, fill free processors from $\mathcal{B}^*$, if possible.
\Else
    \If{$\mathcal{B}^*$ exists}
        \State Process $m$ ready subjobs of $\mathcal{B}^*$ for one step.
    \Else
        \State Leave all processors idle for one step.
    \EndIf
\EndIf
\end{algorithmic}
\end{algorithm}

 Rounds begin at time step $0$. Before any batch becomes eligible, every round is a single idle step. $\SFF$ holds $P$ and $\mathcal{B}^*$ fixed for the full round. Subjobs and batches that arrive during a round are deferred to the next round; consequently, an individual round has length $\max\{1,\lceil |P|/m\rceil\}$, a round with a large batch $\mathcal{B}^*$ must be fully busy, and any round has at most one idle step.

\paragraph{Progress in a round.}
The analysis builds on two observations. 
\begin{enumerate}
    \item A batch $\mathcal{B}$ can be small at the start of at most $\opt$ rounds. In each round, $\SFF$ processes all of its ready subjobs, so every longest path loses a vertex and a batch's depth drops by one whenever it is small. Thus at most $m\opt$ subjobs of a batch are processed over all rounds in which it is small.
    \item At most one batch may turn from large to small per round boundary, since only $\mathcal{B}^*$ is processed among large batches.
\end{enumerate}

\begin{lemma}[Round bounds]

\label[lemma]{lem:round-bounds}
Each round begins with at most $4\opt$ small batches and lasts at most $4\opt$
steps.
\end{lemma}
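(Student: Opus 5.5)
The plan is a strong induction over round boundaries that proves both claims together. The second claim follows from the first. At a boundary with $s$ small batches, each small batch contributes at most $m$ ready subjobs, so $|P|\le sm$. The round therefore lasts $\max\{1,\lceil |P|/m\rceil\}\le\max\{1,s\}$ steps. It thus suffices to show that at every boundary $b_k$ the number $s_k$ of small batches is at most $4\opt$, assuming inductively that all earlier rounds lasted at most $4\opt$ steps.

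\textbf{Charging small batches to entry events.} Fix a batch $\mathcal B$ that is small at $b_k$, and let $b_j$ be the earliest boundary such that $\mathcal B$ is admitted and small at every boundary $b_j,\dots,b_k$. By observation (1) in ``Progress in a round'', a batch is small at the start of at most $\opt$ rounds, so $k-j+1\le\opt$. At $b_j$ the batch \emph{entered} smallness, and there are only two ways this can happen.
\begin{itemize}
\item Either $\mathcal B$ was admitted at $b_j$.
\item Or $\mathcal B$ was large at $b_{j-1}$ and turned small at $b_j$.
\end{itemize}
The second case requires $\mathcal B$ to have been $\mathcal B^*$ in round $j-1$, since only $\mathcal B^*$ is processed among large batches. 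This is observation (2), so at most one such transition happens per boundary. Distinct small batches at $b_k$ give distinct entry events, all lying among the boundaries $b_{k-\opt+1},\dots,b_k$. Hence:
\begin{itemize}
\item large-to-small transitions contribute at most $\opt$ small batches;
\item admissions contribute at most the number of batches admitted at these boundaries.
\end{itemize}

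\textbf{Bounding admissions via elapsed time.} The batches admitted at boundaries $b_{k-\opt+1},\dots,b_k$ are exactly those whose eligibility times lie in $(b_{k-\opt},b_k]$. This interval has length at most the total length of the $\opt$ rounds preceding $b_k$, which is at most $4\opt^2$ by the inductive hypothesis. \Cref{lem:batch-bounds} then bounds the number of eligible batches by $4\opt^2/(2\opt)+1=2\opt+1$. Altogether $s_k\le \opt+2\opt+1$, which is at most $4\opt$ once $\opt\ge1$.

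The off-by-one terms should be checked when writing this out; the slack between $3\opt+1$ and $4\opt$ is there to absorb them. This covers the first boundary, where $b_{k-\opt}$ is undefined and should be read as time $0$, and the possibility $k<\opt$. It also covers whether the entry boundary itself should be counted among the $\opt$ small rounds, given that observation (1) counts rounds in which the batch is small at the start.

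\textbf{Main obstacle.} The delicate step is the charging argument. A batch that is small can become large again, because processing its whole frontier may reveal more than $m$ new ready subjobs. So ``small'' is not monotone, and the entry boundary must be defined as the start of the \emph{current} maximal run of consecutive small boundaries. One must then verify that every re-entry from large to small really is a $\mathcal B^*$ transition, so that observation (2) applies. Once that is established, the rest is the self-consistent recursion: the round-length bound feeds the admission count, which in turn caps the number of small batches at $4\opt$.
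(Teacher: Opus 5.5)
Your proposal is correct and is essentially the paper's proof. It uses the same induction on rounds, with the $\opt$ preceding rounds spanning at most $4\opt^2$ steps, so at most $2\opt+1$ batches are admitted in that window. Every other small batch is charged to one of at most $\opt$ large-to-small transitions; there is at most one per boundary, since only $\mathcal B^*$ is processed among large batches. Your maximal-run definition of the entry boundary is a more explicit form of the paper's ``it was large at some earlier boundary in the window and turned small at a later one.'' Your off-by-one worries also resolve: $k-j+1\le\opt$ already counts the entry boundary, which gives $j\ge k-\opt+1$.
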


\begin{proof}
A round starting with $q$ small batches has length at most $\max\{1,q\}$, because
$|P|\le mq$. We seek to show $q\leq 4\opt$ by induction on rounds. 

Assume each earlier round had length at most $4\opt$. Then the $\opt$ rounds preceding a boundary time $t$ span at most $4\opt^2$
time steps, in which at most $2\opt+1$ batches become eligible (\Cref{lem:batch-bounds}).

Fix some batch $\mathcal{B}$ which is small at boundary time $t$. If $\mathcal{B}$ became eligible in this window, it is one of the aforementioned $2\opt +1$ batches. Otherwise, $\mathcal{B}$ was present at the beginning of all $\opt$ preceding rounds. Since $\mathcal{B}$ is small for at most $\opt$ rounds before finishing, $\mathcal{B}$ was large at the start of some previous round, and turned from large to small at one of the preceding boundaries. Since this can happen to at most one batch per boundary, over all $q$ small batches this gives $q\le(2\opt+1)+\opt\le 4\opt$.
\end{proof}

\begin{lemma}[Remaining work]
\label[lemma]{lem:remaining-work}
At every round boundary, the total remaining work of eligible batches, including
unrevealed subjobs, is at most $16m\opt^{2}$.
\end{lemma}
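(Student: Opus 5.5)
We split on the last boundary at which no large batch existed, and mirror the FIFO argument of \Cref{lem:fifo-remaining-work}: full rounds process work as fast as it can arrive, and rounds with no large batch leave only small batches, whose work we can bound directly.

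Fix a boundary $t$. Let $W(t)$ be the total remaining work of eligible batches at $t$, including unrevealed subjobs.

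\textbf{Case 1: no large batch at $t$.} Let $s\le t$ be the last boundary at which no unfinished admitted batch was large; if there is a large batch at $t$, this case does not apply and we go to Case 2. At such a boundary $s$, every unfinished admitted batch is small. By \Cref{lem:round-bounds} there are at most $4\opt$ of them. Each has at most $3m\opt$ subjobs by \Cref{lem:batch-bounds}, which gives only $12m\opt^2$ total.

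One subtlety concerns batches that are eligible but not yet admitted at $s$. A batch becomes eligible at $e_j$ but is admitted only at the first boundary at or after $e_j$. Such batches became eligible during the previous round, which has length at most $4\opt$. By \Cref{lem:batch-bounds} they hold at most $m(4\opt+3\opt)=7m\opt$ work. Since $7m\opt\le 4m\opt^2$ once $\opt\ge 2$, we get $W(s)\le 16m\opt^2$.

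\textbf{Case 2: a large batch exists at every boundary in $(s,t]$.} Then every round in $[s,t)$ is fully busy, since a round with a large $\mathcal B^*$ fills all $m$ processors in every step. The question is whether this busy work counts against $W$. It does: the processed subjobs belong to admitted, hence eligible, batches, and any batch admitted in $(s,t]$ contributes its work to $W$ once it becomes eligible.

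The work entering $W$ during $(s,t]$ is the work of batches that become eligible in $(s,t]$. By \Cref{lem:batch-bounds} this is at most $m(t-s+3\opt)$. Meanwhile $m(t-s)$ units are processed. Hence
\[
W(t)\le W(s)+3m\opt .
\]
Here $W(s)$ should be taken as in Case 1, but the unadmitted-batch accounting must not be double-counted. I would count all work eligible by time $s$ in $W(s)$, and count only work eligible in $(s,t]$ as arrivals.

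\textbf{Tightening the constants.} With $W(s)\le 12m\opt^2+O(m\opt)$ from Case 1, we get $W(t)\le 12m\opt^2+O(m\opt)\le 16m\opt^2$ for $\opt\ge 2$. If $\opt=1$, each batch has depth $1$, and a direct check should suffice.

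\textbf{Boundary case for $s$.} If no such $s$ exists, take $s=0$, where $W(0)=0$; every round before any batch becomes eligible is idle with $W=0$.

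\textbf{Main obstacle.} The hardest part is the bookkeeping at $s$: correctly handling batches that became eligible but are not yet admitted, together with the boundary/eligibility off-by-one when applying \Cref{lem:batch-bounds} on $(s,t]$. The lemma gives $m(v-u+3\opt)$ for windows $(u,v]$, and the round at $s$ may start slightly before eligibility times. I would absorb this slack into the $4m\opt^2$ margin between $12m\opt^2$ and $16m\opt^2$.
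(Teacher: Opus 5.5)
Your skeleton matches the paper's proof: take the latest boundary $s\le t$ with no large batch, bound $W(s)\le 12m\opt^2$ via \Cref{lem:round-bounds} and \Cref{lem:batch-bounds}, then compare the arrivals $m(t-s+3\opt)$ from \Cref{lem:batch-bounds} with the processing done on $[s,t)$. However, two steps are wrong as written.

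First, the round that \emph{begins} at $s$ need not be fully busy. By the choice of $s$ there is no large batch at $s$, so no filler $\mathcal B^*$ exists. The last step of that round can therefore leave up to $m$ slots idle; if $P=\emptyset$, the round is a single idle step. Only the rounds starting at boundaries in $(s,t)$ are guaranteed to be full. So you can only claim at least $m(t-s)-m$ processed units, and the correct bound is $W(t)\le W(s)+3m\opt+m$, not $W(s)+3m\opt$.

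Second, your correction for batches \emph{eligible but not yet admitted at $s$} is vacuous and double-counts. The routine's first line admits every eligible batch at each boundary, before the small/large classification. So at $s$ every unfinished eligible batch is admitted and small, and is already among the at most $4\opt$ batches counted by \Cref{lem:round-bounds}. The extra $7m\opt$ also breaks your arithmetic. Your own bound gives $12m\opt^2+7m\opt+3m\opt$, which exceeds $16m\opt^2$ at $\opt=2$ ($68m>64m$), contrary to your claim, and the $\opt=1$ \emph{direct check} is never carried out.

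Dropping the spurious term and adding the missing $+m$ gives $W(t)\le 12m\opt^2+3m\opt+m\le 16m\opt^2$ for every $\opt\ge1$, since $3\opt+1\le 4\opt^2$. This needs no case split and is exactly the paper's argument. Note also that $s$ always exists: boundary $0$ has no eligible batch, because $e_0=2\opt>0$.
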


\begin{proof}
Let $W(t)$ be the total remaining work at boundary $t$, after all batches eligible by $t$ are admitted. With no large batch,
\Cref{lem:round-bounds} leaves at most $4\opt$
batches, each with at most $3m\opt$ work by \Cref{lem:batch-bounds}, so
$W(t)\le 12m\opt^{2}$.

Otherwise, let $u\le t$ be the latest
boundary with no large batch, which always exists since no batch is eligible at time $0$. We have $W(u)\le 12m\opt^{2}$ by the previous
argument.
The round at $u$ has at most one idle step, and all rounds after $u$ but preceding $t$ are
fully busy, so at least $m(t-u)-m$ work is processed in $[u,t)$. Any batch admitted after $u$ became eligible in $(u,t]$, so \Cref{lem:batch-bounds} gives that at most $m(t-u+3\opt)$ work is admitted. Remaining work rises only with admissions and falls only by processing, so the $m(t-u)$ terms cancel giving
\begin{align*}
  W(t) &\le W(u)+m(t-u+3\opt)-\big(m(t-u)-m\big) \\
       &= W(u)+3m\opt+m \le 16m\opt^{2}. \qedhere
\end{align*}
 \end{proof}

\begin{proof}[Proof of \Cref{thm:opt-sensitive}]
Fix a batch $\mathcal B$ admitted at boundary time $a$ with completion time $C$, and
set $T:=C-a$. The batches eligible at $a$ gain no further work and hold
at most $16m\opt^{2}$ work together (\Cref{lem:remaining-work}).

For every $t\in\{a,\ldots,C-1\}$, let $x_t$ be the total number of processors
$\SFF$ allocates to these batches during the interval $[t,t+1)$.  Since they gain
no work after $a$,
\[
\sum_{t=a}^{C-1}x_t\le 16m\opt^{2}.
\]
Each of the $m-x_t$ remaining processors is either idle or servicing a younger batch.  A
younger batch is served only while it is small for at most $m\opt$ work over its
lifetime, or as the large filler batch $\mathcal B^{*}$. Since $\mathcal B^{*}$ is the
oldest large batch, a younger $\mathcal B^{*}$ is only chosen when $\mathcal B$
is small. Furthermore, a step is idle only in rounds with no large batch, where
$\mathcal B$ is also small. At most $T/(2\opt)+1$ younger batches become
eligible in $(a,C]$ (\Cref{lem:batch-bounds}) and $\mathcal B$ is small in
at most $\opt$ rounds, so
\[
\sum_{t=a}^{C-1}(m-x_t)\le \frac{mT}{2}+3m\opt.
\]
It then follows that
\[
mT=\sum_{t=a}^{C-1}x_t+\sum_{t=a}^{C-1}(m-x_t)\le 16m\opt^{2}+\frac{mT}{2}+3m\opt,
\]
which gives $T\leq 32\opt^{2}+6\opt$. The same bound applied to every finite prefix of
unfinished rounds forces $\mathcal B$ to complete, and $\mathcal B$ completes within
$32\opt^{2}+6\opt$ steps after it is admitted.  A job of $\mathcal B$ is released at
most $2\opt$ steps before the batch becomes eligible and is admitted at most
$4\opt$ later (\Cref{lem:round-bounds}), then finishes with the rest of batch $\mathcal B$, thus its flow time is at most
\[
2\opt+4\opt+32\opt^{2}+6\opt\le 44\opt^{2}. \qedhere
\]
\end{proof}

\subsection{Combining FIFO and Small-Frontier-First}

We now show \Cref{thm:combined} by constructing a scheduler which combines FIFO and SFF and removes the assumption that we know $\opt$. 

By \Cref{thm:fifo-upper-bound}, FIFO is $O(m)$-competitive. By
\Cref{thm:opt-sensitive}, SFF is $O(\opt)$-competitive when supplied an upper
bound on $\opt$.
For any integer $B\ge\opt$, replacing $\opt$ by $B$ in the analysis of $\SFF$ gives a modified algorithm $\SFF_B$ with maximum flow at most $44B^2$.
Every job has depth at most $B$, and jobs released in an interval of length $\ell$ have at most $m(\ell+B)$ work, so all bounds in that analysis are valid.
Then running both under a doubling guess yields an algorithm that does not require knowledge of $\opt$ and
is $O(\min\{m,\opt\})$-competitive.

\paragraph{The combined algorithm.}
By \Cref{thm:fifo-upper-bound,thm:opt-sensitive} and the previous observation, we may fix a constant $c\ge44$
such that on every instance $I$, $F_{\max}(\SFF_B(I))\le cB^{2}$ whenever $\opt(I)\le B$ and
$F_{\max}(\FIFO(I))\le cm\cdot\opt(I)$. The algorithm runs in phases, each carrying an integer guess $B$ of $\opt$ and
running a fresh instance of $\SFF_B$, then $\FIFO$ once $B\ge m$.
At the start of a phase, each unfinished job is represented by the DAG obtained by deleting all completed subjobs. Future jobs retain their original DAGs. $\SFF_B$ treats the phase's start $s$ as its own time $0$, and presents each job
$J$ at global artificial release time $\hat r_J=r_J+cB^{2}$, or equivalently at local time $\hat r_J-s$.

\begin{algorithm}[H]
\renewcommand{\thealgorithm}{}
\captionsetup{labelformat=empty,labelsep=none}
\caption{\textbf{Combined Non-Clairvoyant Algorithm}}
\label{alg:combined}
\begin{algorithmic}[1]
\State Set $B\gets1$ and begin a phase.
\State For each unfinished or future job, set $\hat r_J\gets r_J+cB^2$.
\State Run $\SFF_B$ if $B<m$; otherwise, run $\FIFO$ permanently.
\State At each time, first record all completions.
\State An $\SFF$ phase fails if an unfinished job has age at least $2cB^2$.
\State On failure, set $B\gets2B$ and begin a new phase.
\end{algorithmic}
\end{algorithm}

\begin{proof}[Proof of \Cref{thm:combined}]
Assume $I$ is nonempty, so $\opt\ge1$.
All artificial releases in the initial phase occur after its start.
If a phase with guess $b$ fails at some time $s$, every unfinished job has age at most $2cb^2$ and under the new guess $B=2b$, each such job has artificial release
$r_J+cB^2\ge s-2cb^2+4cb^2=s+2cb^2>s$.
Future jobs have artificial releases after $s$, so each fresh $\SFF_B$ starts before any of its artificial releases.
Shift an optimal schedule of $I$ forward by $cB^{2}$ and restrict it to the
subjobs remaining at the start of the phase. This is a feasible schedule for
the phase instance, since restricting to the residual DAGs only removes
subjobs and precedence constraints, and it finishes each job within $\opt$ of
its artificial release. Thus, a phase's optimum is at most $\opt$.

Once $B\ge\opt$,
$F_{\max}(\SFF_B)\le cB^{2}$ guarantees each job finishes by age $2cB^{2}$, so an $\SFF$ phase never fails.
Then a failure requires $B<\opt$, and since each guess doubles a failed one, $B\le2\opt$.
Finitely many guesses lie below $m$,
so $\FIFO$ eventually runs or some $\SFF$ phase never fails and every job completes. 

Every guess satisfies $B\le2\opt$ and $B<2m$, so $B\le2\min\{m,\opt\}$ and
$B^{2}\le4\opt\min\{m,\opt\}$. A job completed under $\SFF$ has flow time at most
$2cB^{2}\le8c\opt\min\{m,\opt\}$. Under $\FIFO$, $m\le B\le2\opt$, so its flow time is at most
$cB^{2}+cm\opt\le6c\opt\min\{m,\opt\}$. Thus $F_{\max}\le8c\opt\min\{m,\opt\}$, and the ratio
is $O(\min\{m,\opt\})$.

\end{proof}

\section{FIFO Can be Bad Even When \texorpdfstring{$\opt=O(1)$}{opt = O(1)}}
\label{sec:fifo-short-jobs}

We next show that FIFO can have competitive ratio $\Omega(m)$ even when
$\opt$ is constant.

\begin{theorem}[FIFO lower bound with constant optimum]
\label{thm:fifo-constant-opt}
There exists an instance $I$ in which every job is an
out-tree and
\[
\opt(I)\le 6
\qquad\text{and}\qquad
F_{\max}(\FIFO(I))\ge 3(m-1).
\]
\end{theorem}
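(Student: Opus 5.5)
We will build an explicit instance in which jobs arrive every $3$ timesteps. Each job is an out-tree consisting of a short chain of one or two \emph{sequential} subjobs followed by a layer of parallel leaves. Every job has total work exactly $3m$, so the arrival rate matches the machine capacity exactly. The proof then has two parts: an offline schedule certifying $\opt(I)\le 6$, and an invariant showing that FIFO's backlog, measured by the number of simultaneously alive jobs, keeps growing until some job waits about $3(m-1)$ steps.

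\textbf{Upper bound on $\opt$.} The offline schedule processes each job's sequential chain immediately on release, using one processor per timestep. It fills the remaining processors with leaves of the previous job, exactly as in the proof of \Cref{lem:lower-bound-opt} but with epochs of length $3$. A job released at time $3k$ therefore has its chain done during epoch $k$, and its leaves are all ready and are processed during epoch $k$ and epoch $k+1$. Since each epoch offers $3m$ slots and each job brings $3m$ work, the accounting closes. Every job finishes within two epochs, so its flow time is at most $6$. Only the sizes of the last leaf layer must be tuned (for example, chain length $2$ and $3m-2$ leaves) so that the slots balance exactly.

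\textbf{Lower bound for FIFO.} FIFO always prefers older leaves, so a younger job's chain subjob runs only in a step where the older jobs expose fewer than $m$ ready subjobs. Because ties are arbitrary, we may let FIFO break them adversarially. We will prove by induction the following invariant: if $k<m-1$ jobs are alive at the start of some epoch, then within a bounded number of epochs there is a step in which all older leaves run out while each younger job has only a single ready chain subjob. In that step, at least $m-1-k$ processors idle.

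Each idle slot is lost forever, because arrivals supply exactly $m$ work per step on average. As in the deterministic proof of \Cref{thm:non-clairvoyant-lb}, the unfinished work $B_i$ at epoch boundaries is nondecreasing and increases with every idle slot. Unfinished work of roughly $3m\cdot k$ corresponds to about $k$ alive jobs. Hence the number of alive jobs grows until it reaches $m-1$.

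At that point the oldest alive job was released at least $3(m-1)$ steps earlier, which gives $F_{\max}(\FIFO(I))\ge 3(m-1)$. We then stop releasing jobs, or simply note that this job is not yet finished.

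\textbf{Main obstacle.} The hard part is engineering the leaf counts and chain lengths so that the idle step recurs in every regime $k<m-1$, and not just at the start. Once backlog exists, the leftover leaves of the oldest job in its final step are generally not a multiple of $m$, and several younger chains may advance together, which fills some processors.

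My first attempt will choose the leaf counts so that, under FIFO, leaves of the oldest job always finish exactly at a step boundary. This relies on the fact that $3m$ work per job is divisible by $m$. With that alignment, the partial step consists only of chain subjobs of the $k$ younger jobs, one ready subjob each, giving exactly $m-k$ busy-free slots.

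If exact alignment fails, the fallback is to use chains of length $2$ for some jobs and $1$ for others. This lets the adversary shift residues so that a partial step still leaves at least one idle processor per epoch. That weaker statement already suffices to drive the backlog up by one job every $O(m)$ epochs.
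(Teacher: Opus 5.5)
Your offline schedule for $\opt\le 6$ is correct and is the paper's: one reserved processor runs the chain plus one leaf during the first three steps, and the other $m-1$ processors clear the remaining $3(m-1)$ leaves during the next three. The gap is in the FIFO half. You never commit to a construction that makes the idle step recur, and the approach you lead with, tuning leaf counts so leaves finish on a step boundary, cannot be guaranteed.

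If the chain lengths are fixed in advance, FIFO can settle into a fully busy regime. For instance, let every job be a length-$2$ chain followed by $3m-2$ leaves. After a few steps, each release step processes three things: the last $m-2$ leaves of the job released six steps earlier, the middle subjob of the job released three steps earlier, and the new root. The next two steps process $2m$ leaves of the job released three steps earlier. FIFO is then never idle, and every flow time is $7$. Your fallback, mixing chain lengths $1$ and $2$ to ``shift residues'', names the right ingredients but gives neither a rule nor an invariant. So the key claim, that at some step every alive job exposes exactly one ready subjob, is asserted rather than proved.

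The missing idea is to decide each job's type adaptively, at the moment FIFO processes its root. Make it short (root plus $3m-1$ leaves) if a middle subjob is processed in the same step, and long otherwise. This synchronizes leaf revelation across all alive jobs, and an induction on time gives two invariants:
\begin{enumerate}
\item[(i)] every job with a ready leaf is older than every job with an unprocessed root;
\item[(ii)] no middle subjob is ready while a leaf is ready.
\end{enumerate}
Consequently no new leaves appear during a maximal run of leaf steps, so the run lasts at most $3(m-1)$ steps. The step right after the run has every alive job exposing exactly one ready subjob, and with fewer than $m$ alive jobs this forces an idle processor. The idle step is this one, not the partial last leaf step. That partial step may be filled by roots, but those roots are declared long, so each exposes only a middle subjob in the next step.

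This also repairs your endgame. Your invariant yields idle slots only while $k<m-1$ jobs are alive. Moreover, $m-1$ alive jobs released three steps apart certify an age of only $3(m-2)$, not $3(m-1)$. The paper instead assumes every flow time is below $3(m-1)$ and deduces that fewer than $m$ jobs are ever alive, so the backlog stays at most $3m(m-1)$. It then contradicts this bound with the idle-slot count, which grows by at least one in every window of $3(m-1)+1$ steps.
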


The high-level idea is to construct an instance where an optimal schedule
keeps all processors busy, while FIFO is repeatedly forced to leave a
processor idle.

\paragraph{The hard instance.}
Set $T:=3(m-1)$. For each $i\ge1$, release job $J_i$ at time
$r_i:=3i-2$. Each job has exactly $3m$ subjobs and is one of the following
types, illustrated in \Cref{fig:fifo-job-types}:
\begin{itemize}
\item A \emph{short job} consists of a root with $3m-1$ leaf children.
\item A \emph{long job} consists of a root whose unique child, called its
middle subjob, has $3m-2$ leaf children.
\end{itemize}

\begin{figure}[H]
\centering
\begin{tikzpicture}[
    >=Stealth,
    vertex/.style={draw,circle,minimum size=3.8mm,inner sep=0pt,font=\scriptsize},
    internal/.style={vertex},
    leaf/.style={vertex},
    dependency/.style={->,line width=0.45pt},
    font=\small
]
    \node[font=\small] at (0.78,1.25) {short job};
    \node[internal] (sr) at (0,0.66) {$r$};
    \node[leaf] (s1) at (1.55,0.66) {};
    \node[leaf] (s2) at (1.55,0.22) {};
    \node[leaf] (s3) at (1.55,-0.22) {};
    \node[leaf] (s4) at (1.55,-0.66) {};
    \node[leaf] (s5) at (1.55,-1.10) {};
    \foreach \v in {s1,s2,s3,s4,s5}
        \draw[dependency] (sr) -- (\v);
    \node[font=\scriptsize,anchor=base] at (1.55,-1.55) {$3m-1$ leaves};

    \node[font=\small] at (4.25,1.25) {long job};
    \node[internal] (lr) at (3.00,0.66) {$r$};
    \node[internal] (lm) at (4.00,0.66) {$u$};
    \node[leaf] (l1) at (5.55,0.66) {};
    \node[leaf] (l2) at (5.55,0.22) {};
    \node[leaf] (l3) at (5.55,-0.22) {};
    \node[leaf] (l4) at (5.55,-0.66) {};
    \draw[dependency] (lr) -- (lm);
    \foreach \v in {l1,l2,l3,l4}
        \draw[dependency] (lm) -- (\v);
    \node[font=\scriptsize,anchor=base] at (5.55,-1.55) {$3m-2$ leaves};
\end{tikzpicture}
\caption{The two job types. Labeled vertices are the root $r$ and, in a long
job, the middle subjob $u$; unlabeled vertices represent leaves.}
\label{fig:fifo-job-types}
\Description{A short job has a root joined directly to 3m minus 1 leaves. A
long job has a root joined to one middle subjob, which is joined to 3m minus 2
leaves.}
\end{figure}

The adversary initially reveals only the root of each job and assigns its type
adaptively. If FIFO schedules at least one root in timestep $t$, the adversary
adaptively assigns types as follows:
\begin{enumerate}[label=(\roman*)]
\item If the set also contains a middle subjob, declare every job whose root
is selected short.
\item Otherwise, declare every job whose root is selected long.
\end{enumerate}
If these roots are processed in timestep $t$, the corresponding children
become ready and are revealed at time $t+1$. Note that this adaptive choice is valid
as FIFO is non-clairvoyant.

\begin{claim}[Small optimum]
\label[claim]{clm:fifo-short-opt}
The instance $I$ generated above maintains that $\opt(I)\le6$.
\end{claim}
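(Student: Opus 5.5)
The plan is to exhibit one explicit offline schedule, valid for every type assignment the adversary might make, in which every job finishes within $6$ timesteps of its release. Since jobs are released every $3$ timesteps and each has $3m$ subjobs, the released work exactly matches the processor capacity. The schedule therefore has to be essentially fully busy, and each job's work has to be spread over two consecutive epochs. Here I call timesteps $r_i,r_i+1,r_i+2$ \emph{epoch $i$}.

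\textbf{The schedule.} I will use a priority rule. In every timestep, first process the root of the newest job (at time $r_i$) and its middle subjob if it is long (at time $r_i+1$). Then fill the remaining processors with ready leaves, in order of release.

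\textbf{The invariant.} I will prove by induction on $i$ that at time $r_i$ three things hold: every job $J_1,\ldots,J_{i-2}$ is complete; the root (and middle, if any) of $J_{i-1}$ has been processed; and the remaining $b_{i-1}$ subjobs of $J_{i-1}$ are all ready leaves.

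Granting the invariant, $J_{i-1}$ is finished by the end of epoch $i$, provided that at most $3m-2$ slots of epoch $i$ go to $J_i$'s root and middle and that $b_{i-1}\le 3m-2$. This would give flow time at most $r_{i-1}+6-r_{i-1}=6$ for every job, and hence \Cref{clm:fifo-short-opt}.

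For the inductive step I will do a slot count over epoch $i$. Epoch $i$ has $3m$ slots. At most $2$ of them go to $J_i$'s root and middle, and the rest go first to the leaves of $J_{i-1}$. Any leftover capacity goes to leaves of $J_i$, which are ready from $r_i+1$ if $J_i$ is short and from $r_i+2$ if it is long. Consequently $J_{i-1}$ finishes within epoch $i$, and $b_i=3m-(\text{root/middle of }J_i)-(\text{slack used on }J_i)$.

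\textbf{Expected obstacle.} The main difficulty is the boundary case $b_{i-1}=3m-1$: $J_{i-1}$ is short and received no slack in epoch $i-1$, while $J_i$ is long. Then $J_{i-1}$'s leaves need $3m-1$ slots, but only $3m-2$ remain in epoch $i$.

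To handle this I will strengthen the invariant to a potential bound of the form
\[
b_{i-1}+(\text{number of root/middle subjobs of }J_i)\le 3m .
\]
I will establish it from the total-work identity: all work released by time $r_i$ equals all capacity before $r_i$, minus idle slots, plus backlog. An idle slot can occur only when no ready leaf exists, which forces the backlog to be small at that moment.

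If this bound cannot be maintained for arbitrary type sequences, my first fallback is a different fixed schedule. I would shift the root of $J_i$ to a slot at which the previous job's leaves are already exhausted, or allow one leaf of $J_{i-1}$ to spill into step $r_i+3$; a spilled leaf still completes within $6$ of $r_{i-1}$. I would then redo the same per-epoch slot count for each of the four type pairs (short/long for $J_{i-1}$, short/long for $J_i$).
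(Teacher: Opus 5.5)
Your greedy schedule does work, and you correctly isolate the only way it can fail. The $b_{i-1}$ remaining subjobs of $J_{i-1}$ are ready leaves at $r_i$ and have priority over $J_i$'s leaves, so $J_{i-1}$ finishes in epoch $i$ unless $J_i$ is long and $b_{i-1}=3m-1$. The gap is that you never exclude this case. Since $J_i$'s type is dictated by FIFO's run and may be long whatever $b_{i-1}$ is, your strengthened invariant amounts to $b_{i-1}\le 3m-2$ unconditionally, and you give no proof of it. The total-work identity only ties $b_{i-1}$ to the cumulative number of idle slots before $r_i$. The remark that an idle slot forces a small backlog ``at that moment'' does not by itself bound that cumulative count. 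Your fallback is also unsound as stated: a leaf of $J_{i-1}$ processed in timestep $r_i+3=r_{i-1}+6$ has completion time $r_{i-1}+7$ under the convention $C=t+1$, i.e., flow time $7$.

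The missing step is a direct induction on $b_i\le 3m-2$, starting from $b_0=0$. If $J_i$ is long, this is trivial. If $J_i$ is short, epoch $i$ has $3m-1$ slots besides $J_i$'s root. The at most $3m-2$ ready leaves of $J_{i-1}$ take the earliest of these by priority, so a slot in timestep $r_i+2$ is left over. That slot goes to a leaf of $J_i$, which has been ready since $r_i+1$, giving $b_i\le 3m-2$. (A slightly finer count even gives $b_i\le 2m-2$.) For comparison, the paper sidesteps backlog bookkeeping entirely. One reserved processor runs exactly three subjobs of $J_i$ in timesteps $r_i,r_i+1,r_i+2$: the root, then the middle or a leaf, then a leaf. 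So exactly $3m-3$ ready leaves remain regardless of type. These exactly fill the other $m-1$ processors during $r_i+3,r_i+4,r_i+5$, while the reserved processor starts $J_{i+1}$.
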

\begin{proof}
We describe a schedule with maximum flow time at most $6$. Consider the job $J_i$ released at time $r_i$. During timesteps
$r_i,r_i+1,r_i+2$, we reserve one processor for $J_i$ and process its root and
then one ready subjob in each of the next two timesteps. Its remaining
$3m-3=3(m-1)$ subjobs are then ready leaves. During timesteps
$r_i+3,r_i+4,r_i+5$, the other $m-1$ processors finish these leaves while the
reserved processor processes $J_{i+1}$, if it exists. Thus $J_i$ has flow
time at most six: its last subjob is processed by timestep $r_i+5$, so its
completion time is at most $r_i+6$.
\end{proof}

The adversary ensures that ready leaves belong to the oldest jobs, so FIFO
processes them first. Moreover, when leaves are ready, no middle subjob is
ready. Once FIFO has processed all leaves, it leaves a processor idle in the
next timestep.

\begin{claim}
\label[claim]{clm:fifo-short-invariant}
Suppose every job has flow time less than $T$. Then:
\begin{enumerate}[label=(\roman*)]
\item every job with a ready leaf is older than every other job with an unprocessed
root;
\item if a leaf is ready, then no middle subjob is ready; and
\item every block of $T+1$ consecutive timesteps contains a timestep in which
FIFO leaves a processor idle.
\end{enumerate}
\end{claim}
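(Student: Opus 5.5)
The plan is to prove (i) and (ii) together by induction on time, and then derive (iii) from them with a counting argument. Throughout I use the standing hypothesis that every flow time is less than $T=3(m-1)$, together with one consequence of it. Jobs are released every $3$ timesteps, and an alive job has age less than $3(m-1)$. Hence at any time at most $m-1$ jobs are alive, and fewer than $m-1$ jobs are alive whose release precedes a given time.

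\textbf{Invariants (i) and (ii).} I will check that both statements hold at time $t+1$ given that they hold at time $t$. Leaves at time $t+1$ come from two sources: leaves already ready at time $t$, and leaves revealed by a root or middle subjob processed at time $t$.
\begin{itemize}
\item \emph{Leaves revealed by a root.} A root processed at $t$ reveals leaves only if its job is declared short. By rule (i) of the adversary this requires a middle subjob in the same selected set. By invariant (ii) at time $t$, a middle subjob is ready only when no leaf is ready. So this source only creates leaves when no leaves existed before.
\item \emph{Leaves revealed by a middle subjob.} A middle subjob belongs to a job whose root was processed earlier. FIFO chose that root over all younger roots, and roots of younger jobs are still unprocessed. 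This gives the age ordering needed for (i).
\item \emph{New middle subjobs.} These arise only from roots declared long, which by rule (ii) happens only when the set contains no middle subjob. I need to show that at such a step all ready leaves were processed, so that no leaf survives next to the new middle subjobs.
\end{itemize}
The key point for the last case is that FIFO reaches a root only after exhausting every ready subjob of older jobs, and by (i) all ready leaves are older. So if a root is processed at all, every ready leaf was processed in that step. I expect the main obstacle to be the mixed step: some leaves, some middle subjobs, and roots all compete in one timestep. There I will use (ii) at time $t$ to rule out leaves and middle subjobs being ready together, and I will handle case by case whether the selected set contains a middle subjob.

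\textbf{Idle step after a non-saturated leaf step.} Next I show that whenever a ready leaf exists at time $t$ but FIFO does not spend all $m$ processors on leaves, an idle step follows.
\begin{enumerate}
\item FIFO processes every ready leaf at $t$, since the leaves are the oldest work by (i).
\item Any root it processes at $t$ is declared long, because by (ii) no middle subjob is ready.
\item Hence at time $t+1$ no leaf is ready. Leaves revealed at $t+1$ would need either a short root or a middle subjob processed at $t$, and neither occurred.
\end{enumerate}
At time $t+1$ each alive job therefore has at most one ready subjob, a root or a middle subjob. Since at most $m-1$ jobs are alive, a processor is idle. The same reasoning shows that any step with no ready leaf is itself idle.

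\textbf{Invariant (iii).} Suppose a block of $T+1$ consecutive timesteps has no idle step. By the previous paragraph, every step of the block except possibly the last must use all $m$ processors on leaves. In particular, no root is processed in any of these steps. So every leaf processed in the block belongs to a job whose root was processed before the block. Such jobs were released before the block, so there are at most $m-1$ of them, holding fewer than $3m(m-1)=mT$ leaves. The block, however, demands at least $mT$ leaf slots. This is a contradiction. A little off-by-one care is needed to treat the final step of the block correctly.
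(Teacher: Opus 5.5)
Your treatment of (iii) is sound and is essentially the paper's argument. A step with no ready leaf is idle. A leaf step that does not saturate all $m$ processors forces an idle next step. A run of $T$ saturated leaf steps would need $mT$ leaves from at most $m-1$ jobs whose roots were processed before the block.

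The gap is in your induction step for (ii). Your case list tracks three things: leaves surviving from time $t$, leaves newly revealed by a short root or a processed middle subjob, and newly revealed middle subjobs. It never considers middle subjobs that were ready at time $t$ and were \emph{not} processed in timestep $t$. If some middle subjob is processed at $t$, leaves appear at $t+1$. So (ii) at $t+1$ needs every ready middle subjob to have been processed at $t$. Nothing in your bullets yields this, and it genuinely needs the flow-time hypothesis. With no leaf ready and $m+1$ middle subjobs ready, FIFO runs only $m$ of them, and the leftover middle subjob sits next to the freshly revealed leaves.

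The missing step is the one place where the bound on alive jobs enters the induction, and it is what the paper uses. If a middle subjob is ready at $t$, then by (ii) at $t$ no leaf is ready, so each alive job has exactly one ready subjob (its root or its middle subjob). Since at most $m-1$ jobs are alive, FIFO processes \emph{all} ready subjobs in timestep $t$. Hence no middle subjob survives, and every processed root is declared short. You already have this fact; it is your ``any step with no ready leaf is itself idle.'' You only need to invoke it in the case where the selected set contains a middle subjob. That case is the delicate one, not the leaf/middle/root mix you flagged, which (ii) at $t$ already excludes.

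Two minor points. First, (i) needs no induction. Every unprocessed root of a released job is ready, so FIFO processes roots in release order. Hence any job with a ready leaf, which has a processed root, is older than every job with an unprocessed root. Your bullet states this only for leaves revealed by middle subjobs. Second, your side claim that fewer than $m-1$ alive jobs were released before a given time does not follow from the spacing: for $m=3$, ages $1$ and $4$ are both allowed. You only ever use the correct bound $m-1$, so nothing breaks.
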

\begin{proof}
Since jobs are released three timesteps apart, among any $m$ alive jobs the
oldest has age at least $3(m-1)=T$ and hence flow time greater than $T$.
Thus fewer than $m$ jobs are alive at every time.

We prove (i) and (ii) by induction on time. Both trivially hold at time $1$.
Suppose (i) and (ii) both hold at time $t$. We show that they hold for time $t+1$.

If a leaf is ready at time $t$, then by the induction hypothesis (i), FIFO
processes leaves before roots;
by (ii), no middle subjob is ready. Thus every job whose root is processed
in timestep $t$ is declared long by the adversary's second rule
and reveals only its middle
subjob at time $t+1$. If some leaves remain after timestep $t$, no root was
processed, so at time $t+1$ these leaves still precede all unprocessed roots
and no middle subjob is ready. Otherwise, no leaf is ready at time $t+1$,
so (i) and (ii) hold trivially.

If no leaf is ready at time $t$, each alive job has one ready root or middle
subjob. Since fewer than $m$ jobs are alive, FIFO processes all of these
subjobs in timestep $t$. If a middle subjob runs in that timestep, every job
whose root runs is declared short, so all jobs alive at time $t$ reveal leaves
at time $t+1$. Otherwise, every job whose root runs is declared long, so only
middle subjobs will be revealed at time $t+1$.

Any root released at time $t+1$ belongs to a younger job. Hence (i) and (ii)
hold at time $t+1$ in both cases.

It remains to prove (iii). Consider a maximal run of timesteps with ready
leaves. At its start there are at most $3m\cdot(m-1)$ leaves, since fewer than
$m$ jobs are alive and each has $3m$ subjobs. By (i) and (ii), no new leaf is
revealed during the run. FIFO processes $m$ leaves in every timestep except
possibly the last, so the run lasts at most
$3m\cdot(m-1)/m=3(m-1)=T$ timesteps. Whenever no leaf is ready, each alive job
has exactly one ready root or middle subjob. Since fewer than $m$ jobs are
alive, FIFO processes all of them and leaves a processor idle. Thus every
block of $T+1$ consecutive timesteps contains a timestep with an idle processor.
\end{proof}

\begin{proof}[Proof of \Cref{thm:fifo-constant-opt}]
Suppose, for contradiction, that every job has flow time less than $T$.
At time $3i$, the first $i$ jobs have been released and contain $3mi$
subjobs, while only timesteps $1,\ldots,3i-1$ have been processed. Their
remaining work is therefore $m$ plus the number of idle processor slots in
these timesteps.

By part~(iii) of \Cref{clm:fifo-short-invariant}, this idle-slot count
increases in every block of $T+1$ timesteps. However, fewer than $m$ jobs are
alive, each with at most $3m$ remaining work, so the total remaining work is
always at most $3m\cdot(m-1)$, a contradiction.

Therefore $F_{\max}(\FIFO(I))\ge T=3(m-1)$, while
\Cref{clm:fifo-short-opt} gives $\opt(I)\le6$, completing the proof. 
\end{proof}

\section{Optimal Analysis of Batched Arrivals}
\label{sec:batched-bounds}
\label{sec:batched-lower-bound}
We now turn to the \emph{batched-arrival} setting, in which jobs are released
only at times $0,\opt,2\opt,\ldots$. Agrawal, Moseley, Newman, and
Pruhs~\cite{agrawal24} proved that FIFO is $O(\log(m\opt))$-competitive in
this setting. We give a somewhat simpler proof with a tighter
$O(\log\opt)$ analysis and prove a matching
$\Omega(\min\{m,\log\opt\})$ lower bound for every
non-clairvoyant algorithm. Together with FIFO's $O(m)$ guarantee from
\Cref{thm:fifo-upper-bound}, these results determine the optimal competitive ratio
up to constant factors, proving \Cref{thm:batched-optimal}.

\subsection{Upper Bound}
We now prove the upper bound in \Cref{thm:batched-optimal}. Fix a batched
instance $I$ and write $\opt:=\opt(I)$. We show that
\[
F_{\max}(\FIFO(I))
\le \bigl(\lceil\log_2\opt\rceil+2\bigr)\opt.
\]

Merge all jobs released at time $i\opt$
into a batch $\mathcal B_i$. By \Cref{prop:opt-bounds}, each batch contains
at most $m\opt$ subjobs. For $t\ge i\opt$, let $w_i(t)$ be the remaining work of
$\mathcal B_i$ under FIFO. We call a timestep \emph{prefix-idle for
$\mathcal B_i$} if FIFO processes fewer than $m$ subjobs from
$\mathcal B_0,\ldots,\mathcal B_i$, and let $z_i(t)$ be the number of such
timesteps before time $t$. Because FIFO prioritizes older batches, in every
prefix-idle timestep it processes every ready subjob of $\mathcal B_i$.

The following somewhat counterintuitive key lemma is due to Agrawal et
al.~\cite{agrawal24}.

\begin{lemma}
\label[lemma]{lem:batched-prefix-idle}
If $\mathcal B_i$ is unfinished at time $t$, then
\[
w_i(t)\le m\bigl(\opt-z_i(t)\bigr).
\]
In particular, $z_i(t)<\opt$.
\end{lemma}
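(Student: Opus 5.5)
The plan is to charge each unprocessed subjob of $\mathcal B_i$ to a distinct processor slot of a fixed optimal schedule. Fix an optimal schedule $\OPT$ and write $\sigma(v)$ for the timestep in which it processes subjob $v$. Every job of $\mathcal B_i$ is released at $i\opt$ and completes within $\opt$ steps, so $\sigma(v)\in[i\opt,(i+1)\opt)$ for every $v\in\mathcal B_i$. The goal is to show that every subjob of $\mathcal B_i$ still unprocessed by FIFO at time $t$ satisfies $\sigma(v)\ge i\opt+z_i(t)$. There are exactly $m(\opt-z_i(t))$ processor slots in timesteps $i\opt+z_i(t),\ldots,(i+1)\opt-1$, and distinct subjobs occupy distinct slots of $\OPT$. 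This would give $w_i(t)\le m(\opt-z_i(t))$ at once. I count $z_i(t)$ only over timesteps at or after $i\opt$, since $\mathcal B_i$ does not exist earlier; this is the intended reading.

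For $v\in\mathcal B_i$, define its \emph{depth} $d(v)$ as the number of subjobs on a longest directed path ending at $v$ in the DAG of $\mathcal B_i$. I would prove two facts.

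\emph{First, $d(v)\le \sigma(v)-i\opt+1$.} Along a path ending at $v$, the precedence constraints force strictly increasing $\OPT$ times, all of which are at least $i\opt$.

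\emph{Second, after $k$ prefix-idle timesteps for $\mathcal B_i$, every subjob of depth at most $k$ has been processed by FIFO.} I would show this by induction on $k$. Assume that after the first $k-1$ prefix-idle steps all subjobs of depth at most $k-1$ are done. Every predecessor of a depth-$k$ subjob has depth at most $k-1$, so each depth-$k$ subjob is either already processed or ready at the $k$th prefix-idle step. The text preceding the lemma notes that FIFO processes every ready subjob of $\mathcal B_i$ in such a step, so all depth-$k$ subjobs are then processed. Non-prefix-idle steps can only process more subjobs, so they never break the invariant.

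Combining the two facts: if $v$ is unprocessed at $t$, then $d(v)\ge z_i(t)+1$, and hence $\sigma(v)\ge i\opt+d(v)-1\ge i\opt+z_i(t)$. This is the claimed slot bound. The "in particular" part then follows: if $\mathcal B_i$ is unfinished, then $1\le w_i(t)\le m(\opt-z_i(t))$, which forces $z_i(t)<\opt$.

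I expect the main obstacle to be getting the boundary accounting right: that prefix-idle steps are counted only from $i\opt$ onward, and that the off-by-one between depth and $\OPT$ time matches the definition of $z_i(t)$ as a count of steps strictly before $t$. The depth-frontier induction in the second fact is the one conceptual step, and it is robust to what FIFO does in the non-prefix-idle steps.
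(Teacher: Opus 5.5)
Your proof is correct and follows the paper's argument. The paper also notes that each prefix-idle step (counted from $i\opt$) processes every ready subjob of $\mathcal B_i$, so every remaining subjob has at least $z_i(t)$ predecessors on a directed path; this forces $\OPT$ to schedule it within the last $\opt-z_i(t)$ timesteps of the batch's window, and the bound follows by counting slots.

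Your depth-frontier induction is a more careful version of the paper's ``longest remaining path decreases'' step. Your reading of $z_i(t)$ as counting only timesteps from $i\opt$ onward is the one the paper uses implicitly.
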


\begin{proof}
In every prefix-idle timestep, FIFO processes every ready subjob of
$\mathcal B_i$. Thus the longest remaining path of every unfinished job in
$\mathcal B_i$ decreases by at least one. As every initial path has length at
most $\opt$ by \Cref{prop:opt-bounds}, we have $z_i(t)<\opt$.

Moreover, after $z_i(t)$ such timesteps, every remaining subjob has at least
$z_i(t)$ predecessors on some directed path. Hence an optimal schedule cannot
process any remaining subjob before time $i\opt+z_i(t)$ and must process all
remaining work by time $i\opt+\opt$. The claimed bound follows from the
fact that at most $m\bigl(\opt-z_i(t)\bigr)$ subjobs can be processed in these
$\opt-z_i(t)$ timesteps.
\end{proof}

For each $h$, let $U_h$ be FIFO's unfinished work immediately before the
batch at time $h\opt$ is released. We call $h\opt$ a \emph{strict record
time} if $U_h>U_j$ for every $j<h$.

\begin{lemma}
\label[lemma]{lem:batched-record}
Suppose $h\opt$ is a strict record time. If exactly $q$ of the
earlier batches $\mathcal B_0,\ldots,\mathcal B_{h-1}$ still contain
unfinished work immediately before $\mathcal B_h$ is released, then
\[
q\le \lceil\log_2\opt\rceil.
\]
\end{lemma}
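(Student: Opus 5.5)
The plan is to show that the ``slack'' $\opt-z_i(h\opt)$ of the unfinished batches grows at least geometrically with their age. Since this slack is between $1$ and $\opt$ by \Cref{lem:batched-prefix-idle}, this caps $q$ logarithmically.

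\textbf{Setup.} Let $i_1<i_2<\cdots<i_q<h$ be the indices of the unfinished earlier batches, and set $t:=h\opt$. For each $k$ let $s_k:=\opt-z_{i_k}(t)$. By \Cref{lem:batched-prefix-idle}, $s_k\ge 1$, $s_k\le\opt$, and $w_{i_k}(t)\le m s_k$.

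\textbf{Monotonicity of $s_k$.} A timestep that is prefix-idle for $\mathcal B_{i_k}$ is also prefix-idle for every younger $\mathcal B_{i_{k'}}$ that is already released, because FIFO then processes fewer than $m$ subjobs from an even larger prefix. Hence $z$ is monotone in the appropriate sense, and the $s_k$ are ordered.

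\textbf{Doubling inequality.} The goal is to prove $s_k\ge 2s_{k+1}$, or some comparable doubling such as $s_k\ge s_{k+1}+\sum_{k'>k}s_{k'}$. I would compare $U_h$ with $U_{i_k}$ using a work balance over $[i_k\opt,t)$:
\[
U_h = U_{i_k} + (\text{work of } \mathcal B_{i_k},\ldots,\mathcal B_{h-1}) - (\text{work FIFO processes in } [i_k\opt,t)).
\]
I would bound the terms as follows.
\begin{itemize}
\item The released work is at most $m(h-i_k)\opt$ by \Cref{prop:opt-bounds}.
\item Every timestep in $[i_k\opt,t)$ that is not prefix-idle for $\mathcal B_{i_k}$ processes $m$ subjobs.
\item In the prefix-idle steps, FIFO still processes everything ready in $\mathcal B_{i_k}$ and some work from younger batches.
\end{itemize}
The strict record condition $U_h>U_{i_k}$ then forces the total shortfall over $[i_k\opt,t)$ to be strictly positive. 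This shortfall can only come from timesteps that are prefix-idle for all released batches, so they are counted in $z_{i_{k'}}$ for every unfinished $k'$. I would then charge the remaining work of the younger unfinished batches, which is at most $m\sum_{k'>k}s_{k'}$, against the slack of $\mathcal B_{i_k}$. The intent is to show that the work $\mathcal B_{i_k}$ has left plus the backlog that $U_h>U_{i_k}$ forces can only fit if $s_k$ dominates the sum of the younger $s_{k'}$.

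\textbf{Conclusion.} Once the doubling holds, $\opt\ge s_1\ge 2^{q-1}s_q\ge 2^{q-1}$, so $q\le \log_2\opt+1$. To reach exactly $\lceil\log_2\opt\rceil$, I would then sharpen one end. Either the youngest unfinished batch $\mathcal B_{i_q}$ actually has $s_q\ge 2$, or the oldest has $s_1\le \opt-1$, because it has already spent at least one prefix-idle step given that the record shortfall exists.

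\textbf{Main obstacle.} I expect the main obstacle to be the doubling step: correctly accounting for partial timesteps in which processors are used by younger batches rather than left idle, so that the record inequality translates into a clean inequality among the $s_k$. I would first attempt it with the cleanest charging, namely comparing only consecutive records $U_{i_k}$ versus $U_h$ using the $\mathcal B_{i_k}$-prefix work, and adjust the target inequality if that accounting fails.
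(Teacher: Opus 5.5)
\textbf{There is a genuine gap, and it starts with a reversed monotonicity step.} A timestep that is prefix-idle for a \emph{younger} batch $\mathcal B_{i_{k'}}$ is prefix-idle for every \emph{older} $\mathcal B_{i_k}$. Fewer than $m$ subjobs from the larger prefix $\mathcal B_0,\ldots,\mathcal B_{i_{k'}}$ forces fewer than $m$ from the smaller one. The converse fails: FIFO may run one subjob of the older prefix and $m-1$ subjobs of younger batches. The older batch has also been alive longer, so $z_{i_1}(t)\ge z_{i_2}(t)\ge\cdots\ge z_{i_q}(t)$, i.e.\ $s_1\le s_2\le\cdots\le s_q$.

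Hence your target $s_k\ge 2s_{k+1}$, and a fortiori $s_k\ge s_{k+1}+\sum_{k'>k}s_{k'}$, can never hold once $q\ge 2$. It would give $s_{k+1}\le s_k/2\le s_{k+1}/2$, contradicting $s_{k+1}\ge 1$. What grows geometrically with age is the sequential progress $z$, not the slack. The correct inequality is
\[
z_{i_r}(t)>\sum_{\ell>r}z_{i_\ell}(t),
\qquad\text{equivalently}\qquad
\opt-s_r>\sum_{\ell>r}(\opt-s_\ell).
\]

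\textbf{How to obtain it.} The total-work balance you sketch only certifies, via $U_h>U_{i_k}$, that some idle step occurred. Instead, use the record condition to cap the \emph{prefix} work.
\begin{itemize}
\item Right after $\mathcal B_{i_r}$ is released, the unfinished work of $\mathcal B_0,\ldots,\mathcal B_{i_r}$ is at most $U_{i_r}+m\opt<U_h+m\opt$.
\item The $q-r$ younger unfinished batches are released at distinct later multiples of $\opt$ before $h\opt$. So $\mathcal B_{i_r}$ has age at least $(q-r+1)\opt$ at time $t$.
\item Every step in that window that is not prefix-idle for $\mathcal B_{i_r}$ removes $m$ units of this prefix. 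Therefore
\[
\sum_{\ell\le r}w_{i_\ell}(t)<U_h+m\opt-m\bigl((q-r+1)\opt-z_{i_r}(t)\bigr).
\]
\end{itemize}
Subtract this from $U_h=\sum_{\ell\le q}w_{i_\ell}(t)$ and apply $w_{i_\ell}(t)\le m\bigl(\opt-z_{i_\ell}(t)\bigr)$ to the younger batches. This yields $z_{i_r}(t)>\sum_{\ell>r}z_{i_\ell}(t)$; the case $r=q$ gives $z_{i_q}(t)\ge 1$.

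Integrality then gives $z_{i_r}(t)\ge 2^{q-r}$. The strict bound $z_{i_1}(t)<\opt$ for the unfinished oldest batch gives $2^{q-1}<\opt$, i.e.\ $q\le\lceil\log_2\opt\rceil$ directly. The separate sharpening step in your conclusion is therefore unnecessary.
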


\begin{proof}
Order the unfinished batches from oldest to newest. For the $r$th batch,
let $w_r$ be its remaining work and let $z_r$ be its number of prefix-idle
timesteps by time $h\opt$. We first show that, for every $r=1,\ldots,q$,
\begin{equation}
\label{eq:batched-geometric-progress}
z_r>\sum_{\ell=r+1}^{q}z_\ell.
\end{equation}

The $r$th batch has age at least $(q-r+1)\opt$. When it arrived, its prefix
contained less than $U_h+m\opt$ work because $h\opt$ is a strict record time. Every
timestep that is not prefix-idle removes $m$ units of this work. Hence
\[
\sum_{\ell=1}^{r}w_\ell
<U_h+m\opt-m\bigl((q-r+1)\opt-z_r\bigr).
\]
Since $U_h=\sum_{\ell=1}^{q}w_\ell$, rearranging and applying
\Cref{lem:batched-prefix-idle} gives
\[
z_r>
\sum_{\ell=r+1}^{q}\left(\opt-\frac{w_\ell}{m}\right)
\ge \sum_{\ell=r+1}^{q}z_\ell,
\]
which proves \eqref{eq:batched-geometric-progress}.

As the $z_r$ are integers, the recurrence above yields $z_r\ge2^{q-r}$.
As the oldest batch is unfinished, \Cref{lem:batched-prefix-idle}
gives $2^{q-1}\le z_1<\opt$, and the result follows.
\end{proof}

\begin{proof}[Proof of the upper bound in \Cref{thm:batched-optimal}]
The case $\opt=1$ is immediate, so assume $\opt\ge2$ and set
$K:=\lceil\log_2\opt\rceil$. At every strict record time, at most $K$
batches are unfinished, each with at most $m\opt$ subjobs. It follows that
$U_h\le Km\opt$ for every $h$.

Fix a batch $\mathcal B_i$. At its release, it and all earlier batches have
at most $(K+1)m\opt$ unfinished work. Until $\mathcal B_i$ completes, every
timestep either processes $m$ units of this prefix or is prefix-idle for
$\mathcal B_i$. There are at most $(K+1)\opt$ timesteps of the first type and,
by \Cref{lem:batched-prefix-idle}, at most $\opt$ of the second. Thus every job
in $\mathcal B_i$ has flow time at most $(K+2)\opt$, giving an
$O(\log\opt)$ competitive ratio. Combining this with FIFO's $O(m)$ bound
from \Cref{thm:fifo-upper-bound} proves the theorem.
\end{proof}

\subsection{Lower Bound}
We now prove the lower bound in \Cref{thm:batched-optimal}. This bound matches the
preceding FIFO guarantee and reveals a sharp contrast with general arrivals:
restricting releases to multiples of $\opt$ reduces the optimal dependence
on $\opt$ from linear to logarithmic.

More precisely, fix $m\ge2$ and an integer $1\le c\le\lfloor m/2\rfloor$, and set $L:=8c5^c$. We construct, for every deterministic non-clairvoyant algorithm $\ALG$, a fixed finite batched-arrival instance $I$ satisfying
$$
\opt(I)=L
\qquad\text{and}\qquad
F_{\max}(\ALG(I))\ge cL.
$$
Since $L=8c5^c$, we have $c=\Theta(\log\opt(I))$.

The extension to randomized algorithms follows by constructing a hard distribution through repeated copies of this instance and applying Yao's lemma, as in the proof of \Cref{thm:non-clairvoyant-lb}.

\paragraph{The Jobs.}
We use a construction similar to that in \Cref{sec:lower-bound}.  Let $n = cmL$. The instance $I$ is the following. For $i=0,\ldots,n-1$, one job $J_{i+1}$ is released at time $iL$. We call timesteps $iL,\ldots,(i+1)L-1$ \emph{epoch $i$}. Each job is a layered out-forest with key subjobs defined as before, except that each wide layer now contains $b:=2m-1$ subjobs. Each job has exactly $L/2$ wide layers and $L/2$ narrow layers, and hence exactly $mL$ subjobs.

The adversary chooses the order of the wide and narrow layers adaptively, subject to the following \emph{prefix condition}: every prefix of a job contains at least as many wide layers as narrow layers. This condition ensures that the optimal maximum flow time is exactly $L$, as shown next.
\begin{lemma}
\label[lemma]{lem:batched-opt} The constructed instance $I$ satisfies $\opt(I)=L$.
\end{lemma}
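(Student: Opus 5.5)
The plan is to prove the two inequalities $\opt(I)\ge L$ and $\opt(I)\le L$ separately. The lower bound is immediate: every job contains exactly $mL$ subjobs. Indeed, there are $L/2$ wide layers of $b=2m-1$ subjobs and $L/2$ narrow layers of one subjob, for a total of $\tfrac L2(2m-1)+\tfrac L2=mL$. With $m$ processors, any schedule therefore needs at least $L$ timesteps to finish a job after its release.

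For the upper bound, I will exhibit an offline schedule in which each job $J_{i+1}$ is processed entirely within its own epoch $iL,\ldots,(i+1)L-1$, using all $m$ processors. Since consecutive releases are exactly $L$ apart, the epochs are disjoint. Jobs therefore never compete, and each job has flow time exactly $L$. So it suffices to show that a single job, with its layers in any order obeying the prefix condition, can be processed in $L$ timesteps on $m$ processors with no idle slot.

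The schedule for one job is as follows. The job has exactly $L$ layers. In local timestep $t=0,\ldots,L-1$, one processor processes the key subjob of layer $t+1$. This is valid: layer $1$ is ready at release, and layer $t+1$ becomes ready once the key of layer $t$ has been processed in step $t-1$. So by step $t$ the layers $1,\ldots,t+1$ are revealed. Each narrow layer consists only of its key, while each wide layer consists of its key plus $2m-2$ leaves. The remaining $m-1$ processors process ready leaves greedily, in any order.

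The main step is to check that these $m-1$ processors never run out of leaves. Leaves never stop being ready once revealed, so greedy filling succeeds exactly when the cumulative supply dominates the cumulative demand. That is, for every $t$ the number of leaves in layers $1,\ldots,t+1$ must be at least $(m-1)(t+1)$. Let $w_{t+1}$ be the number of wide layers among the first $t+1$. The prefix condition gives $w_{t+1}\ge (t+1)/2$, so the supply is
\[
(2m-2)\,w_{t+1}\ \ge\ (m-1)(t+1).
\]
This is precisely where the prefix condition is used, and it is the only nontrivial point. Summing over all $L$ steps, the schedule processes all $L$ keys and $(m-1)L=(2m-2)\cdot\tfrac L2$ leaves, which is every subjob of the job. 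Hence each job completes by the end of its epoch, so $\opt(I)\le L$, and combined with the lower bound, $\opt(I)=L$.
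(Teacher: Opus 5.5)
Your proposal is correct and follows the paper's proof: the paper uses the same $mL$-work lower bound and the same per-epoch schedule, with one processor on the key of layer $j$ in the $j$th timestep and $m-1$ on leaves. Your supply inequality $(2m-2)\,w_{t+1}\ge (m-1)(t+1)$ is equivalent to the paper's count of ready leaves, $(m-1)(W_j-N_j+1)\ge m-1$, and both follow from the prefix condition.
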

\begin{proof}
Each job has $mL$ subjobs, so $\opt(I)\ge L$. We now show that each job can be processed  entirely within its release epoch, thus showing $\opt(I) = L$. In the $j$th timestep of that
epoch, process the key of layer $j$ and $m-1$ non-key leaves.

Let $W_j$ and $N_j$ count the wide and narrow layers among the first $j$
layers. The preceding keys make all these layers available. Each wide layer
contributes $2(m-1)$ non-key leaves, and earlier timesteps have processed
$(m-1)(j-1)$ of them. Thus the number of ready non-key leaves is
\[
\begin{aligned}
2(m-1)W_j-(m-1)(j-1)
&=(m-1)(W_j-N_j+1)\\
&\ge m-1.
\end{aligned}
\]
The schedule is therefore valid and finishes every job in exactly $L$
timesteps, proving $\opt(I)=L$.
\end{proof}

\paragraph{The Adversary.}
As in \Cref{sec:lower-bound}, the adversary chooses the key subjob of each wide layer so that $\ALG$ finishes the entire layer before the next layer becomes ready. It remains to describe how the adversary adaptively chooses the order of wide and narrow layers.

\paragraph{Choosing the Layer Order.}
The adversary chooses the layer order so as to force $\ALG$ to leave processor slots idle. At the beginning of each epoch, call a job \emph{active} if it still has an unfinished or unused wide layer (a wide layer that is not yet exposed), and \emph{passive} otherwise. For each active job $J_i$, let $x_i$ count its wide subjobs processed in the current epoch.

After each timestep, let $D$ be the number of timesteps remaining in the epoch. An active job $J_i$ becomes passive once $x_i\ge bD$ or it has no wide layers remaining. While a job is active, its next layer is wide; once it becomes passive, its next layers are narrow for the remainder of the epoch.

If an idle processor slot occurs, the adversary has achieved its goal for the epoch. For the remainder of that epoch, whenever a new layer becomes ready, it is made wide whenever a wide layer remains, and narrow otherwise. At the beginning of the next epoch, the active/passive labels and counters are reset.

We next verify that the adversary's adaptive choices are always consistent with the required job structure.

\begin{claim}
\label[claim]{clm:batched-prefix}
At every time and for every job, the sequence of layers fixed so far satisfies the prefix condition: every prefix contains at least as many wide layers as narrow layers.
\end{claim}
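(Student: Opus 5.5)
The plan is to prove the claim by induction over epochs, using the fact that, within one epoch, the layers of a job are exposed in a simple pattern. For a fixed job, the layers exposed during an epoch before any idle slot form a (possibly empty) block of wide layers, exposed while the job is active, followed by a block of narrow layers, exposed while it is passive. If an idle slot occurs, this is followed by a further block in which layers are wide as long as wide layers remain, and narrow only afterwards. Suppose the prefix condition holds for the layers fixed before the epoch, so that $W\ge N$ there. Then it suffices to show two things: the narrow block that follows the active phase is no longer than the wide block that precedes it in the same epoch, and narrow layers in the post-idle phase appear only once all $L/2$ wide layers have been used. Given both, every prefix that ends inside the epoch's segment still has at least as many wide layers as narrow layers. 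The base case is immediate, since a job released at the start of an epoch is active and its first layer is wide.

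The key step is the length comparison between the two blocks. A job becomes passive at the end of some timestep either because it has no wide layers remaining, or because $x_i\ge bD$, where $D$ is the number of timesteps left in the epoch.
\begin{itemize}
\item In the first case, all $L/2$ wide layers are already fixed. Since the job has only $L/2$ narrow layers in total, every prefix satisfies $N\le L/2=W$, so there is nothing more to check for this job.
\item In the second case, I will use the adversary's choice of keys: $\ALG$ finishes every wide layer before the next layer is exposed. Each wide layer has $b$ subjobs, so the $x_i$ wide subjobs processed in this epoch come from at least $\lceil x_i/b\rceil\ge D$ wide layers exposed during this epoch's active phase. After passivity, at most one narrow layer can become ready per timestep, because each narrow layer is a single key subjob that must be processed before the next layer appears. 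Hence at most $D$ narrow layers are exposed in the rest of the epoch, which is at most the length of the preceding wide block.
\end{itemize}

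Once an idle slot occurs, the post-idle rule makes new layers wide while any wide layer remains. Any narrow layer exposed after that point therefore comes after all $L/2$ wide layers, and the same counting as in the first case applies. Prefixes that end in the post-idle wide block are fine because appending wide layers only helps.

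The step I expect to need the most care is the bookkeeping at epoch boundaries and in the counting of $x_i$. One concern is a wide layer exposed during the active phase but only partially processed when passivity is declared; since $x_i$ counts processed subjobs, such a layer still contributes at least the needed amount to the bound $\lceil x_i/b\rceil$. Another is a layer exposed in one epoch but processed in the next. I will handle both by attributing each layer to the epoch in which it is exposed, that is, fixed as wide or narrow, and by noting that the counters and labels reset at the start of each epoch. The inductive hypothesis then concerns exactly the layers fixed before the new epoch begins.
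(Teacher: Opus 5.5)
Your overall plan matches the paper's proof: induction over epochs, the two reasons a job becomes passive, at least $D$ touched wide layers against at most $D$ later narrow ones, and the post-idle rule. However, the key counting step is off by one, and the reduction you set up fails because of it.

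You claim that the $x_i\ge bD$ wide subjobs processed this epoch come from at least $D$ wide layers \emph{exposed during this epoch's active phase}. This is false when the epoch begins in the middle of a wide layer $\lambda$ exposed in an earlier epoch. Such a job is active at the new epoch's start precisely because $\lambda$ is unfinished. Since $x_i$ counts every wide subjob processed in the current epoch, subjobs of $\lambda$ are included. For example, $\ALG$ may complete $\lambda$ and then exactly $D-1$ new wide layers, reaching $x_i=bD$ in the timestep where it processes the last of these keys; the threshold was not met earlier, since before then $x_i\le bD$ was below $b$ times the larger remaining count. Up to $D$ narrow layers can then follow before the epoch ends. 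So your sufficient condition, ``the narrow block is no longer than the wide block preceding it in the same epoch,'' can genuinely fail. Attributing each layer to the epoch in which it is exposed, as your last paragraph proposes, is exactly what produces this off-by-one rather than repairing it.

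The fix is to compare against the whole fixed sequence rather than the current epoch's segment. This is what the paper means by the current prefix having surplus at least $D$. $\ALG$ finishes each layer before the next is exposed, and no narrow layer is fixed while the job is active. Hence the $\ge D$ wide layers touched this epoch form a consecutive run at the end of the currently fixed sequence, possibly starting with $\lambda$. The prefix preceding that run consists of layers fixed before the epoch, so by the inductive hypothesis it has at least as many wide as narrow layers. Thus the full fixed sequence has $W-N\ge D$ at the moment of passivity, which absorbs the at most $D$ narrow layers exposed afterwards in the epoch. Equivalently, if the last layer fixed before the epoch is wide, applying the prefix condition to the prefix without it shows that the surplus at the epoch's start is already at least $1$. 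With this change, the rest of your argument goes through unchanged.
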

\begin{proof}
We prove the claim by induction over epochs. Revealing a wide layer cannot violate the prefix condition.

If a job becomes passive because all wide layers are exhausted, the remaining layers are narrow and the prefix condition is immediate. Otherwise, suppose it becomes passive with $D$ timesteps remaining, so $x_i\ge bD$. Then $\ALG$ has processed subjobs from at least $D$ wide layers during the epoch. Since no narrow layers are added while the job is active, the current prefix has surplus at least $D$ wide layers. Thus at least $D$ narrow layers remain, and at most $D$ further layers can become ready before the epoch ends. Making these layers narrow therefore preserves the prefix condition.

After an idle processor slot occurs, placing all remaining wide layers immediately before all remaining narrow layers also preserves the condition.
\end{proof}

The next lemma shows that, as long as at most $c$ jobs are alive in an epoch, the threshold rule cannot keep all $m$ processors busy for an entire epoch.

\begin{lemma}
\label[lemma]{lem:batched-idle}
Every epoch that begins with at most $c$ alive jobs contains an idle processor-time slot.
\end{lemma}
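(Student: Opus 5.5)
I would argue by contradiction: fix an epoch $e$ that begins with $q\le c\le\lfloor m/2\rfloor$ alive jobs, and suppose all $m$ processors are busy in every timestep of the epoch. Then the ``idle slot achieved'' clause never fires, so the active/passive rule governs the whole epoch. No job is released inside the epoch, so all work processed comes from these $q$ jobs. The plan has four steps.

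\textbf{Step 1 (supply of a passive job).} Suppose job $J_i$ becomes passive when $D_i$ timesteps remain. From then on it supplies at most the unfinished part of its currently exposed wide layer, which is at most $b$ subjobs, plus one narrow subjob per timestep. Hence it supplies at most $b+D_i$ subjobs over the rest of the epoch. Every job becomes passive by the end, since at $D=0$ the condition $x_i\ge bD$ holds trivially. Let $D^\ast$ be the number of remaining timesteps when the \emph{last} job becomes passive. During those final $D^\ast$ timesteps the total supply is at most $q(b+D^\ast)\le c(2m+D^\ast)$. Full busyness requires $mD^\ast$, and $q\le m/2$, so busyness forces $D^\ast\le 4c$ or so. In words: at most $O(c)$ steps before the end of the epoch, some job must still be active.

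\textbf{Step 2 (the last active job was driven out by exhaustion).} Before a job turns passive through the threshold, we have $x_i<b(D+1)$ at every check. Near the end of the epoch this allows only $O(bc)=O(mc)$ further wide work. I would try to show that if the last active job turned passive via the threshold $x_i\ge bD$ with $D=O(c)$, then the remaining passive jobs cannot have kept the processors busy up to that point. The intended mechanism is that each passive job supplies only about one subjob per step, so busyness needs an active job doing roughly $m-q\ge m/2$ wide work per step. That rate pushes $x_i$ over $bD$ well before $D$ reaches $O(c)$. If this works, the only remaining way to keep processors busy is for jobs to \emph{exhaust their wide layers}. So the contradiction reduces to showing that wide-layer exhaustion cannot happen early enough.

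\textbf{Step 3 (budget of wide layers; the main obstacle).} This is the step I expect to be hardest, and it is where $L=8c5^c$ must enter. Each job has only $L/2$ wide layers, and it is alive for at most about $c$ epochs under the standing assumption of the surrounding argument. I would prove, by induction over the epochs a job lives through, that in each epoch the threshold rule lets the algorithm consume at most a constant fraction (something like $4/5$) of the job's remaining wide budget, measured in the scale $bD$. Consequently, a job of age $a$ epochs still has at least roughly $L/(2\cdot 5^{a})\ge 4c$ wide layers left. That is enough to stay active through the final $O(c)$ steps identified in Step 1.

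\textbf{Step 4 (conclusion).} With Step 3, no job is passive by exhaustion too early, and by Step 2 no job is passive by threshold too early. This contradicts Step 1's conclusion that the last job becomes passive with at most $O(c)$ steps remaining while processors stay busy. Hence an idle slot occurs. The delicate part is choosing the per-epoch constant in Step 3 so that the geometric decay, over at most $c$ epochs, matches the factor $5^c$ in $L$.
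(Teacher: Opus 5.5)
Your proposal has a genuine gap. Step 1 is correct, and the mechanism you sketch in Step 2 is the right idea: full busyness forces at least $m-c$ wide subjobs per step, which drives an active job past $x_i\ge bD$. But you apply it only to the last active job. Several active jobs can share the wide work, so the last active job may have done almost no wide work before the others turned passive, and Step 2 does not go through as stated.

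What is missing is to iterate your Step 1 computation over all active jobs within the single epoch:
\begin{enumerate}
\item Order the active jobs by the number of remaining steps $D_1\ge D_2\ge\cdots\ge D_p$ at which each turns passive or completes, set $D_0:=L$, and count the wide subjobs processed in the final $D_i$ steps.
\item Busyness gives at least $(m-c)D_i$ of them.
\item Each job $j>i$ was still active with $D_j+1$ steps remaining, so it processed fewer than $b(D_j+2)$ wide subjobs while active. Every job processes at most one further wide layer after turning passive.
\item Hence $(m-c)D_i\le b\bigl(c+\sum_{j>i}(D_j+2)\bigr)$, and $b/(m-c)<4$ turns this into $D_i+2\le(4c+2)5^{p-i}$.
\end{enumerate}
Your Step 1 is essentially the case $i=p$. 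The case $i=0$ gives $L+2\le(4c+2)5^c<L$, the contradiction. So the factor $5^c$ in $L$ comes from the at most $c$ jobs inside one epoch, not from the number of epochs a job lives through.

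Step 3, which you flag as the main obstacle, is both unnecessary and false. Exhaustion needs no separate case. A job that turns passive because its wide layers ran out was still active one step earlier, so the same cap $b(D_j+2)$ applies. Running out of wide layers only shrinks the wide supply that busyness needs.

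The invariant you propose also fails, because the threshold compares $x_i$ with the time left in the epoch, not with the job's remaining wide budget. Moreover, once an idle slot occurs, the adversary makes every newly ready layer wide for the rest of that epoch. For example, the first job is alone at time $0$, so $\ALG$ must leave a processor idle in the second step of its first wide layer. From then on every new layer of that job is wide, and a greedy $\ALG$ finishes all $L/2$ of its wide layers before epoch $0$ ends. So the job has no wide layers left at age one epoch.

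Finally, your induction relies on jobs living at most about $c$ epochs. That is not a hypothesis of the lemma, which assumes only that at most $c$ jobs are alive when the epoch begins.
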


\begin{proof}
Suppose, for contradiction, that every timestep in the epoch is full. Let $p\le c$ be the number of jobs that are active at the beginning of the epoch. For each such job $j$, let $D_j$ be the number of timesteps remaining when the job either becomes passive or completes. Note that $D_j$ is well defined: if the job does not complete earlier, then it becomes passive by the end of the epoch since the threshold $x_j\ge bD$ is automatic when $D=0$. Without loss of generality, index the jobs so $D_1\ge D_2\ge\cdots\ge D_p$, and set $D_0:=L$.

For $i=0,\ldots,p$, let $Y_i$ denote the number of wide subjobs processed during the final $D_i$ timesteps of the epoch. We derive a recurrence on $D_i$ by giving a lower and an upper bound on $Y_i$.

\paragraph{Lower Bound on $Y_i$.}
Since every processor is busy, exactly $m$ subjobs are processed in each timestep. At most one of these can be a narrow subjob from each alive job, and there are at most $c$ alive jobs. Hence at least $m-c$ wide subjobs are processed in every timestep, giving

\begin{align}\label{eq:y-lb}
Y_i\ge (m-c)D_i.
\end{align}

\paragraph{Upper Bound on $Y_i$.}
Consider a job $j>i$. One timestep before it becomes passive or completes, there are $D_j+1$ timesteps remaining and the job is still active. Hence $x_j<b(D_j+1)$. In the next timestep, at most $m<b$ additional wide subjobs of this job can be processed. Thus job $j$ processes fewer than $b(D_j+2)$ wide subjobs while active.

After a job becomes passive, no new wide layer is given to it, so it can process at most one additional wide layer, contributing at most $b$ wide subjobs. Summing over all jobs gives

\begin{align}\label{eq:y-ub}
Y_i\le b\left(c+\sum_{j=i+1}^{p}(D_j+2)\right).
\end{align}
Combining \eqref{eq:y-lb} and \eqref{eq:y-ub},
\begin{align}
(m-c)D_i
\le
b\left(c+\sum_{j=i+1}^{p}(D_j+2)\right).
\end{align}
Since $c \leq m/2$ and $b=2m-1$, we have $b/(m-c)<4$, and therefore

$$
D_i+2\le4c+2+4\sum_{j=i+1}^{p}(D_j+2).
$$
Solving this recurrence yields
$
D_i+2\le(4c+2)5^{p-i}.
$
Taking $i=0$ and using $p\le c$ gives
$$
D_0 +2 = L+2\le(4c+2)5^p\le(4c+2)5^c<L,
$$
which is a contradiction, and thus the epoch contains an idle processor time slot.
\end{proof}

\begin{proof}[Proof of the lower bound in \Cref{thm:batched-optimal}]
Suppose every job has flow time less than $cL$. Then at most $c$ jobs are alive at each epoch's start, so \Cref{lem:batched-idle} guarantees an idle processor-time slot in every epoch.

Each epoch releases $mL$ work but processes at most $mL-1$, leaving at least $n=cmL$ unfinished work after $n$ epochs. Yet every job released by time $(n-c)L$ must have completed by time $nL$, leaving at most $c-1$ jobs and $(c-1)mL$ unfinished work, a contradiction.

Thus some job has flow time at least $cL$; \Cref{lem:batched-opt} completes the proof.
\end{proof}

\section{Related Work}

We now review some closely related work.  

Scheduling parallel programs which can be modeled as DAGs has been studied extensively.  For a single job, if the structure is known, trees can be scheduled optimally using several algorithms~\cite{hu61,muntz70} while minimizing makespan for general DAGs is NP-hard when the number
of processors is part of the input~\cite{lenstra78}.  However, several theoretically good and practical schedulers have been designed such as greedy schedulers~\cite{graham69} and work-stealing schedulers~\cite{blumofe99}.  

In the setting where multiple jobs share a parallel machine, called the \emph{multiprogrammed setting}, there has been both theoretical and practical work~\cite{arora98,agrawal08,he08,agrawal19}.  Flow-time metrics have also been studied in a different non-clairvoyant precedence-constrained scheduling setting, though both its information model and objectives differ from the dynamic-multithreaded DAG model studied here~\cite{robert08,garg19}.  In the context of flow-time objectives, most of this work is in the context of \emph{speed augmentation} where an
$s$-speed $c$-competitive algorithm achieves a competitive ratio of $c$
when given processors $s$ times the speed of the optimal schedule.
Speed augmentation has long been a central tool for obtaining guarantees in non-clairvoyant scheduling~\cite{motwani94,edmonds00,kalyanasundaram00}.
An algorithm is scalable if, for every fixed $\epsilon>0$, it is
$(1+\epsilon)$-speed competitive with a ratio depending only on
$\epsilon$. Agrawal et al.~\cite{agrawalLLM16spaa} show that FIFO is
scalable for DAG jobs.   Scalable algorithms have also been designed for other flow-time objectives such as average flow time~\cite{agrawalLLM16soda} and throughput~\cite{agrawalLLM18}.  

A closely related model of parallel jobs is the \emph{speedup curves model}~\cite{pruhsRS10,moseleyZZ22}.  In this model, the job has phases of parallelism and in phase $i$, the job has $W_i$ work and completes $s_i(m)$ work given $m$ processors.  Generally, $s_i$ is a non-decreasing sublinear function to indicate that getting additional processors is never harmful and that there is decreasing marginal benefit for each additional processor given to a job.  The speedup curves model and the DAG model are fundamentally different; algorithmic and impossibility results do not appear to translate from one to another.  For instance, given $(1+\epsilon)$ speed, FIFO is $O(1)$-competitive for DAGs~\cite{agrawalLLM16spaa}, but no non-clairvoyant algorithm can be $o(\log m)$-competitive in the speedup curves model~\cite{pruhsRS10}.  Interestingly, the scheduler that achieves this result is not FIFO; FIFO is a very bad scheduler for the non-clairvoyant setting in the speedup curves model.

Without extra speed, all non-clairvoyant algorithms have an unbounded competitive ratio in the speedup curves model~\cite{moseleyZZ22}. The distinction between the models is important for interpreting these
lower bounds. In the speedup curves model, one source of difficulty is
\emph{inter-job} allocation: a non-clairvoyant scheduler does not know
a job's current speedup curve and may allocate many processors to a job
that cannot use them effectively. In the DAG model, the scheduler sees
each job's ready vertices and therefore knows its currently available
parallelism. However, this information does not reveal which ready
vertices will expose additional parallelism. Our lower-bound
construction exploits this uncertainty in \emph{intra-job} scheduling:
delaying these vertices creates periods of low parallelism, whereas a
clairvoyant scheduler can process them early and overlap work across jobs.

\section{Conclusions}

This paper essentially closes the problem of non-clairvoyant scheduling of DAG jobs for maximum flow time without speed augmentation; even for trees, the achievable competitive ratio is shown to be $\Theta(\min\{m, \opt\})$ where $m$ is the number of processors and $\opt$ is the maximum flow time achieved by the optimal offline algorithm.  We either need speed augmentation or we must know the structure of dependences when the job arrives in order to get better results.  For trees, one can achieve $\Theta(1)$-competitiveness~\cite{agrawal24} if the structure is known on arrival.
An important open question is whether structural information allows us to get a constant or even a logarithmic competitive ratio for general DAGs, or even for natural subclasses such as series-parallel DAGs.  More generally, what partial information—such as  advice about the critical path or coarse information about future parallelism—is sufficient to recover a good competitive ratio?

\section*{Acknowledgments}

The second and third authors thank Nikhil Bansal for several insightful
discussions about the problem. All four authors thank Nikhil Bansal and Kirk
Pruhs for bringing the two groups together after discovering their independent
work on the problem.

\section*{AI Disclosure}

OpenAI's ChatGPT 6 Astra was used extensively in the development of this
work. The authors first developed the lower-bound construction for batched
arrivals. Building on this construction, ChatGPT 6 generated and
refined the arguments leading to the tight upper and lower bounds for general
non-clairvoyant scheduling. AI tools were used in finding references and  preparing the
manuscript. The authors are responsible for all content in the
paper.

\clearpage
\balance
\bibliographystyle{ACM-Reference-Format}
\bibliography{Bibliograpghy}

\end{document}